\newtheorem{theorem}{Theorem}
\newtheorem{definition}{Definition}
\newtheorem{example}{Example}
\newtheorem{lemma}{Lemma}
\newenvironment{proof}[1][Proof]{\emph{#1.} }{\  \hfill $\square $ \vspace{5 pt}}
\newcommand*\samethanks[1][\value{footnote}]{\footnotemark[#1]}
\begin{document}

\title{Mechanisms for a dynamic many-to-many school choice problem\thanks{We are grateful to María Haydée Fonseca-Mairena, Marina Nuñez, Juan Pereyra, Francisco Robles, Oriol Tejada, William Thomson, Juan Pablo Torres Martínez, and Bumin Yenmez and participants of the Recent Advances in School Choice Workshop for their valuable comments and suggestions on earlier versions of this paper. 
We acknowledge financial support from the UNSL (grants 032016 and 031323), the CONICET (grant PIP112-200801-00655), and the Agencia I+D+i (grant PICT2017-2355).\smallskip}}

\author{Adriana Amieva\thanks{Instituto de Matem\'{a}tica Aplicada San Luis, Universidad Nacional de San Luis and CONICET, San Luis, Argentina;   e-mails: \texttt{avamieva@email.unsl.edu.ar} (A. Amieva), \texttt{abonifacio@unsl.edu.ar} (A. Bonifacio),  \texttt{paneme@unsl.edu.ar} (P. Neme).\smallskip} \and  Agustín G. Bonifacio\samethanks[2] \and Pablo Neme\samethanks[2]}

\date{\today}
\maketitle

\begin{abstract}
We examine the problem of assigning teachers to public schools over time when teachers have tenured positions and can work simultaneously in multiple schools. To do this, we investigate a dynamic many-to-many school choice problem where public schools have priorities over teachers and teachers hold path-independent choice functions selecting subsets of schools. We introduce a new concept of dynamic stability that recognizes the tenured positions of teachers and we prove that a dynamically stable matching always exists. We propose the Tenure-Respecting Deferred Acceptance $(TRDA)$ mechanism, which produces a dynamically stable matching that is constrained-efficient within the class of dynamically stable matchings and minimizes unjustified claims. To improve efficiency beyond this class, we also propose the  Tenure-Respecting Efficiency-Adjusted Deferred Acceptance $(TREADA)$ mechanism, an adaptation of the Efficiency-Adjusted Deferred Acceptance mechanism to our dynamic context. We demonstrate that the outcome of the $TREADA$ mechanism Pareto-dominates any dynamically stable matching and achieves efficiency when all teachers consent. Additionally, we examine the issue of manipulability, showing that although the $TRDA$ and $TREADA$ mechanisms can be manipulated, they remain non-obviously dynamically manipulable under specific conditions on schools' priorities.
\bigskip

\noindent \emph{JEL classification:} D71.

\noindent \emph{Keywords:} Dynamic matching markets, stability, efficiency, obvious manipulability

\end{abstract}

\section{Introduction}\label{intro}

School choice models have been extensively studied in the literature on two-sided matching. These models can be categorized into two variants. On the one hand, there is the traditional school choice model, where parents rank the schools they want to send their children to according to their preferences, and schools, following local laws, establish priority rankings among children \citep[see for instance][among many others]{abdulkadirouglu2003school,Abdulkadiroglu2004,Abdulkadiroglu2005,Abdulkadiroglu2009,Abdulkadiroglu2011,kesten2010school,Klijn2013,Kojima2011}. On the other hand, there is a variant of the school choice model in which teachers search for jobs in public schools, and schools have priority rankings over teachers, as in the previous case. Our paper focuses on this second variant. We consider a school choice model in which each public school has a priority ranking established by local laws, and each teacher has a choice function selection subsets of schools \cite[see for instance][among others]{pereyra2013dynamic,chen2019self,tang2021weak,combe2022design}.

In this paper, we examine the problem of assigning teachers to public schools. The main distinctive feature of our model is that teachers can be assigned to more than one school. Additionally, we recognize that this problem is dynamic, as the teacher population changes from one period to the next. Specifically, given an initial assignment of teachers to public schools, some teachers choose to exit the market while others decide to enter. Our model assumes that teachers whom schools already employ hold tenured positions. Consequently, schools are unable to lay off teachers in order to hire new ones, and they can only do so when a vacant position appears.

An important characteristic of school choice models is that the state and local authorities determine school priorities. For instance, in many educational systems worldwide, there is a classification board (or examining board) that ranks teachers for each school. This ranking is developed based on several factors, such as teachers' CVs, the distance from teachers' houses to the school, the school’s orientation, among others. In this paper, we assume that the schools' priorities are strict.

In many countries, such as Argentina, teachers are employed by several schools. Depending on the characteristics of job offers, teachers may not have a ranking responsive to a linear order over individual schools.\footnote{The property of responsiveness ensures that a teacher’s preference over sets of schools is consistent with her individual ranking over schools. A teacher prefers one set of schools over another if the first set contains more individually preferred schools, and adding a desirable school to a set makes it strictly more preferred.} For instance, a teacher’s choice function may not be representable by a ranking over individual schools. She might choose to work at both schools $A$ and $B$ rather than only at $A$ or only at $B$. However, she may still choose school $C$ when it is available, as it offers better working conditions than the combination of $A$ and $B$. Furthermore, due to feasibility constraints—such as geographical distance—she may not consider working simultaneously at $A$ and $C$ or at $B$ and $C$ as viable options. In such a case, her choice function would select ${C}$ when ${A, B, C}$ are available; ${A, B}$ when only $A$ and $B$ are available; lastly either ${A}$ or ${B}$ when offered separately; and so on.\footnote{Note that if a teacher's choice function does not choose schools $A$ and $C$ combined when available nor schools $B$ and $C$ combined when available, then her choice will not be \emph{responsive} to the linear ordering of individual schools in which $A$ is first, $B$ is second, and $C$ is last. For a thorough treatment of responsive preferences see \cite{RothSotomayor90}.} This scenario suggests that the most appropriate way to model teachers' choice behavior is through path-independent choice functions. This modeling approach is justified by the result of \cite{aizerman1981general}, who show that a choice function is path-independent if and only if it satisfies two key properties: \emph{substitutability} and \emph{consistency}. Substitutability requires that, whenever a teacher selects a given subset of schools from a larger set of options, she must continue selecting each of those schools even if some of the other schools are no longer available. In other words, the inclusion of additional options should not be necessary for a school to be chosen. Consistency, in turn, ensures that if a smaller set of schools contains all those that were selected from a larger set, then the teacher's choice from the smaller set must remain unchanged. For a detailed discussion, see also \cite{chamb2017choice}.

When studying many-to-many school choice problems, \emph{stability} is the most relevant solution concept. There are three conditions needed for such a concept to be met. The first condition is \emph{individual rationality}, i.e. it should assign each teacher only to schools on her choice list or leave her unassigned.\footnote{Usually, for two-sided matching models, this condition is required for both sides of the problem. However, due to the nature of the school choice model, the condition for the schools’ side is implicitly embedded in the priorities.} The second condition requires that there is no unmatched teacher--school pair $(i, s)$ such that teacher $i$ \emph{has a claim} over school $s$---that is, (i) $i$ would prefer to be assigned to $s$, and (ii) some teacher currently assigned to $s$ has lower priority than $i$.\footnote{In school choice models, this concept is also known as \emph{justified envy} \citep[see][for more details]{abdulkadirouglu2003school}.}  Lastly, for a matching to be stable it must be \emph{non-wasteful}, i.e. whenever a teacher would like to add a school to her assignment, that school already has all its positions filled.
 \cite{martinez2008invariance} show that the set of stable matchings remains the same when an agent’s preference relation changes, as long as these two preference relations induce the same choice function. Since it is well known that the set of stable matchings is non-empty when preferences are substitutable, then, in our context, when firms have path-independent choice functions, the set of stable matchings is non-empty \cite[see][for more details]{chamb2017choice}.

When considering dynamic problems, we must adapt the notion of stability to such a dynamic context. In this paper, we define an appropriate notion of stability that takes into account the tenured nature of our model by considering the matching from the previous period for the teachers who decided to stay in the market. 
One of the notions we need to adapt is that of individual rationality. Following the same reasoning as in static problems, if a teacher is assigned to a school not on her choice list, she may seek another job outside the market. The same happens if she worsens her labor situation from one period to the next when she decides to stay in the market. In dynamic problems, we call a matching that prevents both situations \emph{dynamically rational}.

Teachers can have claims over schools at a matching in dynamic problems, as in static ones. Due to the tenured nature of teachers' jobs, we consider two kinds of claims. Remember that, when a teacher has a claim over a school, there is another teacher with lower priority currently assigned to that school. If the latter teacher was not assigned to the school in the previous period, the claim is \emph{justified}. Otherwise, the claim is \emph{unjustified}. Our adapted notion of stability for the dynamic environment says that a matching is \emph{dynamically stable} if it is dynamically rational, eliminates justified claims, and is non-wasteful.

For a dynamic problem, we present a mechanism---the Tenure-Respecting Deferred Acceptance mechanism ($TRDA$ from now on)---that respects the tenured nature of the problem and produces a dynamically stable matching as its outcome. In doing so, we prove that a dynamically stable matching always exists for each dynamic problem. Moreover, the outcome of the $TRDA$ mechanism improves upon any other dynamically stable matching from the teachers' standpoint.
 We also show that the $TRDA$ mechanism not only eliminates justified claims but also minimizes unjustified ones, i.e.,  
there is no other mechanism yielding a dynamically stable matching where the set of unjustified claims is a strict subset of the unjustified claims produced by the $TRDA$ mechanism.

When considering dynamic problems, a related paper is \cite{pereyra2013dynamic}. This paper examines a dynamic problem similar to ours, but with the restriction that teachers must work at most in one school. By using the teacher-proposing deferred acceptance mechanism for this less general market, \cite{pereyra2013dynamic} shows the existence of a dynamically stable matching that: (i) minimizes unjustified claims and (ii) Pareto-dominates any other dynamically stable matching. Our paper extends the results of \cite{pereyra2013dynamic} to a more general problem where teachers can work at multiple schools and also have path-independent choice functions.

Following the line of research on dynamic markets, recent work explores alternative formulations where agents arrive over time. \cite{doval2022dynamicallstable} introduces dynamic stability for two-sided, one-to-one markets, addressing timing issues and externalities from unmatched agents, ensuring stable matchings and timely participation. \cite{nicolo2023dynamic} propose the Dynamic Core for dynamic one-sided markets and design the Intertemporal Top Trading Cycle algorithm, which finds Pareto-efficient and group strategy-proof allocations within this core. Building on these ideas, \cite{saulle2024rationalizable} refine dynamic stability by incorporating agents’ rationalizable conjectures, defining a new stability concept that refines Doval’s and is always non-empty.

Beyond stability, efficiency and non-manipulation are desirable properties in mechanism design. On the one hand, it is known that there is a trade-off between stability and efficiency \citep{Roth1985} and, therefore, in static problems, the idea of improving efficiency while relaxing the stability condition has been widely explored. On the other hand,  requiring non-manipulability of mechanisms can be difficult to achieve, especially in general matching problems. However, it is common in the literature to relax this condition and require that manipulations of agents are not easily recognizable. We address both concerns in the context of dynamic markets. In addition to introducing the $TRDA$ mechanism, we analyze an alternative mechanism that achieves greater efficiency at the cost of dynamic stability. Furthermore, we study the strategic behavior of teachers under both mechanisms. While the non-manipulability of dynamic mechanisms cannot be guaranteed, we ensure that our two mechanisms are not easily manipulable.

The teacher-proposing deferred acceptance mechanism consistently selects the optimal stable matching for teachers in a static school choice problem. However, it is well-known that the teacher-proposing deferred acceptance mechanism is not always efficient \citep{Roth1985a}. This inefficiency problem has been empirically demonstrated, as shown by \cite{Abdulkadiroglu2009} in the case of New York City's high school match, where thousands of students could have been assigned to more preferred options without harming others.\footnote{In our many-to-many model, the role of the students is played by the teachers, but the same inefficiency prevails.}

For a standard many-to-one static school choice problem, \cite{kesten2010school} proposed the Efficiency-Adjusted Deferred Acceptance mechanism ($EADA$ from now on)  to enhance the efficiency of the teacher-proposing deferred acceptance mechanism beyond the class of stable matchings. The $EADA$ mechanism initially implements the teacher-proposing deferred acceptance mechanism and then iteratively identifies teachers who cause inefficiencies by applying to schools that will ultimately reject them. With these teachers' consent, their applications to such schools are subsequently blocked, ensuring that each teacher is assigned to either the same or a higher-ranked school compared to the original outcome of the teacher-proposing deferred acceptance mechanism.


The $EADA$ mechanism requests the consent of teachers who are likely to be rejected by schools to waive their priority at those schools. This consent is only requested when it does not harm the teachers' chances of being assigned to a better school, meaning that waiving their priority will not negatively affect their outcome. This ensures that teachers have no reason to withhold consent, as doing so does not disadvantage them and can improve the overall efficiency of the matching process by benefiting other teachers.
The implementation of the $EADA$ mechanism satisfies numerous desirable properties and emerges as a serious contender to the teacher-proposing deferred acceptance mechanism, particularly in contexts where both efficiency and stability are crucial.

Considering that our model is many-to-many, that teachers have path-independent choice functions and tenured positions, and given the dynamic nature of the problem, we propose the \textit{Tenure-Respecting Efficiency-Adjusted Deferred Acceptance} mechanism ($TREADA$ from now on), which generalizes the one introduced by \citet{kesten2010school}. We show that, when teachers have path-independent choice functions, the outcome of the $TREADA$ mechanism, although not always dynamically stable, improves upon any dynamically stable matching, including in particular the outcome of the $TRDA$ mechanism. Moreover, if all teachers consent, the outcome of the $TREADA$ mechanism is efficient.

We analyze the possibility of manipulation by teachers by adopting a framework in which they report substitutable preferences over subsets of schools, rather than path-independent choice functions. While this reformulation does not constitute a substantive change, expressing the problem in terms of preferences is more suitable for studying strategic behavior, as it allows manipulations to be modeled as deviations in the reported preferences. The equivalence between both formulations stems from the fact that path-independent choice functions are precisely those that are substitutable and consistent. Conversely, any substitutable preference relation induces a choice function that satisfies both substitutability and consistency \citep[see][for details]{martinez2008invariance}. Therefore, focusing on substitutable preferences preserves the essential structure of the problem.

Within this framework, it becomes possible to study whether teachers can benefit by misrepresenting their preferences—raising a central question in mechanism design: can the proposed mechanism be manipulated? And if so, how obvious is the manipulation? Following the result of \cite{martinez2004group} for static many-to-one matching problems, the teacher-proposing deferred acceptance mechanism may be subject to manipulation \citep[see also][among others]{hatfield2005matching,Abdulkadiroglu2009,kesten2010school,alva2019strategy}. In this setting, substitutability alone does not guarantee immunity from manipulation, and this insight extends naturally to many-to-many problems.

In our analysis, due to the structure of schools' priorities, only teachers have the capacity to behave strategically. This asymmetry in strategic possibilities has been explored extensively in many-to-many matching problems \citep[see][among others]{sakai2011note,hirata2017stable,iwase2022equivalence}.

Given that the complete elimination of manipulations is not feasible,  the literature has shifted to stable-dominating mechanisms that are robust against obvious manipulations.\footnote{A stable-dominating mechanism is either a stable mechanism or mechanisms that Pareto-dominates (from teachers' standpoint) a stable one.}  As defined in \cite{troyan2020obvious}, a manipulation is considered \emph{obvious}
if the best possible outcome under the manipulation is strictly better than the best possible
outcome under truth-telling, or if the worst possible outcome under the manipulation is strictly
better than the worst possible outcome under truth-telling. The immunity of these kinds of manipulations is studied in \cite{arribillaga2023obvious} for a static many-to-one problem with substitutable preferences, and in \cite{arribillaga2024many-yo-many-obvious} for a static many-to-many problem with substitutable preferences on both sides.\footnote{Another weakening of non-manipulability recently considered in the literature is that of \emph{regret-free truth-telling}, where no student regrets reporting her true preferences. \cite{chen2024regret} demonstrates that the $EADA$ mechanism is regret-free truth-telling for a many-to-one school choice market.}

Due to the temporal evolution of the population of teachers and their tenured conditions,  we adapt the notion of obvious manipulability to our dynamic context, coining it  \emph{obvious dynamic manipulability}. Assuming substitutability in teachers' preferences, without placing any restrictions on schools' priorities, we show that the $TRDA$ mechanism can be obviously dynamically manipulable. To guarantee that not only the $TRDA$ mechanism but also the $TREADA$ and all stable-dominating mechanisms are immune to obvious dynamic manipulations, a condition on schools' priorities is required. This condition, first introduced by \cite{pereyra2013dynamic}, is referred to as \emph{lexicographic by tenure} and says that, in each period, teachers who were matched in the previous period are given priority at every school over those entering the problem in the current period. 
The paper is organized as follows. Section \ref{sec preliminaries} presents the problem and preliminaries. The $TRDA$ mechanism and its properties are in Section \ref{section TRDA mecanismo}. Section \ref{seccion de eficiencia} presents the $TREADA$ mechanism that improves the efficiency over all dynamically stable mechanisms. The study of the non-obvious manipulability of the mechanisms presented in Sections \ref{section TRDA mecanismo} and \ref{seccion de eficiencia} is in Section \ref{Seccion de manipulation no obvia}. Section \ref{seccion de discusion} contains some final remarks. Finally,  Appendix \ref{apendice de pruebas} gathers all proofs.

\section{Preliminaries}\label{sec preliminaries}

We consider a dynamic many-to-many school choice problem where time is discrete and lasts forever (starting at $t=1$). At each period, there are two disjoint sets of agents: the set of schools (which will always be the same) and the set of teachers (which will vary from period to period).\footnote{From one period to the next, the set of teachers changes since teachers may leave or new teachers may enter the market.} We also assume that teachers hold \textit{tenured} positions throughout the periods, that is, once a school employs a teacher, the school cannot fire her. We denote by $S$ the set of schools and by $I^t$ the set of teachers in the problem at period $t$. Each school  $s\in S$ has, at period $t$, a strict priority relation $\boldsymbol{>_s^t}$ over the set of teachers available at this period, and is represented by the ordered list of teachers (from most to least preferred). We assume throughout the paper that the relative order between teachers in different periods is the same, i.e., given two periods $t$ and $t'$,  $s\in S$, and $i,j\in I^t\cap I^{t'}$, $i>_s^t j$ if and only if $i>_s^{t'} j.$  Each teacher $i\in I^t$ has a choice function $C_i:2^S\to 2^S$ that, given a subset of schools $S'$, $C_i(S')$ selects a subset of $S'$. Given a teacher $i\in I^t$, a school $s\in S$ is \textbf{acceptable for $\boldsymbol{i}$} if $s\in C_i(\{s\})$.  Teachers reveal their choice functions once they enter the problem.

We say that teacher $i$'s choice function satisfies \textbf{path-independence} if 
\begin{equation}\label{propiedad 1}
C_i(S'\cup S'')=C_i(C_i(S')\cup S'')
\end{equation} for each pair of subsets $S'$ and $S''$ of schools. \cite{aizerman1981general} establishes that a choice function is path-independent if and only if it satisfies \textbf{substitutability} and \textbf{consistency} \citep[see also][ for more details]{chamb2017choice}. Substitutability states that if for each $ S'\subseteq S$ we have that $s\in C_i(S')$ implies that $s\in C_i(S'\setminus \{s'\})$  for each $s'\in S'\setminus \{s\}$. Consistency states that for $S',S''\in ,$ if $C_i\subseteq S''\subseteq S'$ then $C_i(S')=C_i(S'')$. 
Throughout the paper, we assume that teachers have path-independent choice functions.

 Given teacher $i$'s Choice function $C_i$, \cite{Blair1988} defines a partial order for $i$  over subsets of schools as follows: given two subsets of schools  $S'$ and $S''$,  $\boldsymbol{S' \succeq^B_i S''}$ whenever $S'=C_i(S' \cup S'')$. We also write  $\boldsymbol{S'\succ^B_i S''}$ whenever $S' \succeq^B_i S''$ and $S' \neq S''$. 
This partial order over sets, under the assumption of substitutability, was first used by \citet{Roth1984b} to show the optimality of the \textit{DA} mechanism, and later by \citet{Blair1988} to prove that the set of stable matchings forms a lattice. Since then, many papers in the literature have employed this order (known as Blair's partial order) to deal with substitutability \cite[see][among others]{Fleiner2003,Echenique2004,Echenique2006}.

A \textbf{static problem} is a tuple $$(I, S , C_{I}, >_S,q_S)$$ where $I$ is the set of teachers, $S$ is the set of schools, $C_{I}= (C_i)_{i\in I}$ is the profile of choice functions of teachers and  $>_S^t=(>_s^t)_{s\in S}$ is the profile of priorities of schools, and $q_S=(q_s)_{s\in S}$ is the profile of quotas of schools.

A \textbf{matching} $\mu$ for a static problem is a function from the set $I\cup S$ in $2^{I \cup S}$ such that for each $i\in I$ and for each $s\in S$ fulfills the following conditions: (i) $\mu(i)\subseteq S$, (ii) $\mu(s)\subseteq I$, (iii) $|\mu(s)|\leq q_s$, and (iv) $i\in\mu(s) \text{ if and only if }s\in\mu(i).$

 Given a static problem, we say that teacher $i$ has a \textbf{claim} over school $s$ with respect to priority $>_s$ at matching $\mu$ if: (i) $s\notin \mu(i),$ (ii) $s\in C_i (\mu(i)\cup \{s\})$, and (iii) there is $j\in  \mu(s)$ such that $i>_s j$.
 We say that a matching $\mu$ \textbf{eliminates all claims} with respect to $(>_s)_{s\in S}$ if no teacher has a claim.
A matching $\mu$ is \textbf{non-wasteful (for schools)} if whenever $i\in I$, $s\in S$ with $s\notin \mu(i)$ and $s\in C_i(\mu(i)\cup \{s\})$, we have $|\mu (s)|=q_s.$ A matching $\mu$ is \textbf{individually rational}  if  $\mu(i)\succeq_i^B \emptyset$, for each $i\in I$. Notice that this condition is only required for all teachers, this is because of the nature of the priorities of schools. 
We say that a matching $\mu$ is \textbf{stable} for a static problem if it is individually rational, non-wasteful, and eliminates all claims.

\cite{martinez2008invariance} show that the set of stable matchings remains the same when an agent’s preference relation changes, as long as the two preference relations induce the same choice function. While it is well known that substitutable preferences guarantee the existence of stable matchings, we depart from this framework by directly assuming that firms have path-independent choice functions. This broader assumption also ensures the existence of stable matchings \cite[see][for more details]{chamb2017choice}.

A \textbf{dynamic problem}  is a tuple $\mathcal{P}^t=(I^t, S, C_{I^t}, >_S^t,q_S,\mu^{t-1})$. A matching for a dynamic problem is defined similarly in static problems. This problem differs from a static problem in that it also includes the matching $\mu^{t-1}$ (of the previous period).\footnote{In the dynamic problem $\mathcal{P}^t$, when we refer to $\mu^{t-1}$, we mean the restriction of that matching to the set of agents who are present in period $t$.}   Given that in period $t$, $I^t, S, q_S$ and $\mu^{t-1}$ are fixed, some times to ease notation, we simply denote problem  $\mathcal{P}^t$ as $(C_{I^t}, >_S^t)$.
The set of matchings at period $t$ is denoted by $\mathcal{M}^t$.

Given a dynamic problem $\mathcal{P}^t$, we call $I^{t}_E$ to the set of all teachers at period $t$ who were matched at $\mu^{t-1}$ with some school, formally, $I^{t}_E=\{i\in I^{t}: \mu^{t-1}(i)\neq \emptyset\}.$\footnote{Notice that, by the nature of the problem, we implicitly assume that $\mu^{t-1}(i)\succeq_i^B \emptyset$ for each $i\in I^t_E.$} 

Given a domain of dynamic problems, a \textbf{mechanism} is a function $\varphi$ that assigns to each problem $\mathcal{P}^t=(C_{I^t}, >_S^t)$ a matching $\varphi (C_{I^t}, >_S^t)\in\mathcal{M}^t$.

In order to adequately define a solution concept that fits well to a dynamic problem, and captures the fact that teachers hold tenured positions, matchings not only have to fulfill the already known notion of individual rationality, but they must also ensure that teachers who remain from one period to the next do not worsen their situation. 
 The following definition formalizes this notion. 
 
\begin{definition}Given a dynamic problem $\mathcal{P}^t=(I^t, S, C_{I^t}, >_S^t,q_S,\mu^{t-1})$, a matching $\mu^t$ is \textbf{dynamically rational} if it is individually rational (at period $t$), and $\mu^t(i)\succeq_i^B \mu^{t-1}(i)  $ for each $i\in I^t_E.$ 
   
\end{definition}


The other notion we need to adapt to dynamic problems is that of claim. Assume that a teacher has a claim over a school at one period. Given that teachers hold tenured positions, we must consider the teachers assigned to the school at the previous period. 
Recall that a teacher has a claim over a school if the matching assigns another teacher with lower priority to this school. If the teacher matched to the school was already matched from the previous period, the claim is unjustified since teachers hold tenured positions. Otherwise, the claim is justified. Formally,

\begin{definition}
Let  $\mathcal{P}^t=(C_{I^t}, >_S^t)$ be a dynamic problem and let $\mu^t\in \mathcal{M}^t$. Assume that  teacher $i$ has a claim over  school $s$ with respect to priority $>_s^t$ at matching $\mu^t$, so there is $j\in I^t$ such that $j\in \mu^t(s),$ and $i>_s^t j$. We say that the claim is  \textbf{justified} if $j\notin \mu^{t-1}(s)$.  Otherwise, if $j\in \mu^{t-1}(s)$, the claim is \textbf{unjustified}.

\end{definition}

We say that a matching $\mu^t$ \textbf{eliminates justified claims} with respect to $>_S^t$ if whenever teacher  $i$ has a claim over school $s$ with respect to priority $>_s^t$ at matching $\mu^t$, the claim is unjustified. To adapt 
 the notion of stability to dynamic problems, we must relax the requirement of eliminating claims, i.e. we will only request that a matching eliminate justified claims.  Unjustified claims will not be realized since teachers hold tenured positions. Finally, we are in a position to define an appropriate solution concept for dynamic problems.

\begin{definition}
   Let  $\mathcal{P}^t=(C_{I^t}, >_S^t)$ be a dynamic problem. A matching $\mu^t$ is \textbf{dynamically stable} if it is dynamically rational, non-wasteful, and eliminates justified claims with respect to $>_S^t$ at $\mu^t$.
\end{definition}
 Denote by $\mathcal{DS}^t$ the set of all dynamically stable matchings of $\mathcal{P}^t$.

\section{The tenure-respecting deferred acceptance mechanism}\label{section TRDA mecanismo}
Once we have defined the appropriate solution concept for dynamic problems, we must try to answer whether such a solution always exists. In this section, we answer this in the affirmative.  We do this by defining a new priority profile for the schools derived from the dynamic problem taking into account that teachers hold tenured positions. Given a school, the teachers who were matched with that school in the previous period are ranked above all other teachers present in the current period, while maintaining their relative order among themselves. With this new priority, we define a related static problem. Then, we compute a stable matching for this related static problem through the teacher-proposing deferred acceptance mechanism.
We will further show that this stable matching turns out to be dynamically stable in the original dynamic problem. Now we formally define the derived priorities.
\begin{definition}\label{defino violo prioridades}
    Let  $\mathcal{P}^t=(I^t, S, C_{I^t}, >_S^t,q_S,\mu^{t-1})$ be a dynamic problem. For each school $s\in S$ and each priority $>^t_s$,  define \textbf{derived priority $\boldsymbol{\gg_s^t}$} as follows. For each pair $i,j\in I^t$,
    \begin{enumerate}[(i)]
        \item $i\in \mu^{t-1}(s)$ and $j\in I^t\setminus \mu^{t-1}(s)$ imply that $i\gg_s j,$ 
        \item $i,j\in \mu^{t-1}(s)$ implies that  $i\gg_sj\text{ if and only if } i>_s^t j,$  and
        \item $i,j\in I^t\setminus\mu^{t-1}(s)$ implies that  $i\gg_sj\text{ if and only if } i>_s^t j.$  
    \end{enumerate}
\end{definition}

Given a  dynamic problem $\mathcal{P}^t=(I^t, S, C_{I^t}, >_S^t,q_S,\mu^{t-1})$ and  the derived priorities $\gg^t_S$ we can define the \emph{related (static) problem} $$\mathcal{P}=(I^t, S, C_{I^t}, \gg_S^t,q_S).$$
As mentioned earlier, for this related problem, we can compute a stable matching using the deferred acceptance algorithm. Consequently, we can define a mechanism for addressing dynamic problems by considering the related static problem. In this manner, the \textbf{Tenure-Respecting Deferred Acceptance ($\boldsymbol{TRDA}$) mechanism} assigns to each dynamic problem the result obtained by applying the teacher-proposing deferred acceptance mechanism to its corresponding static problem. We present in Table \ref{TRDA mechanism} the $TRDA$ mechanism for dynamic problems.
\medskip
\begin{table}[h!]
    \centering    
    \begin{tabular}{l p{13cm}}\hline
     \textbf{Input:}  &A dynamic  problem $(I^t, S, C_{I^t}, >_S^t,q_S,\mu^{t-1})$. \\
       \textbf{Output:}  &A  dynamically stable matching.\vspace{10 pt}\\
   \textbf{Step $\boldsymbol{1}$}     &Each teacher $i$ proposes to her chosen set of schools $C_i(S)$ and each school $s$ accepts the top $q_s$ teachers among those who have applied to it \emph{with respect to its derived priority} $\gg_s$, and rejects the rest of the proposing teachers. \\
      \textbf{Step $\boldsymbol{k}$}   &Each teacher $i$ proposes to her chosen set of schools that includes all of those schools that she previously proposed to and which have not yet rejected her. Each school $s$ maintains the top $q_s$ teachers among those that propose now and those that were accepted in the previous round, \emph{with respect to its derived priority}, and rejects the rest. The mechanism ends when there are no rejections. \\   \hline
    \end{tabular}
    \caption{The $TRDA$ mechanism}
    \label{TRDA mechanism}
\end{table}

The following theorem states that our mechanisms for dynamic problems generate a dynamically stable matching. Formally,

\begin{theorem}\label{conjunto de estables dinamico no vacio}
For the domain of dynamic problems in which teachers have path-independece choice functions, the $TRDA$ mechanism is dynamically stable. 
\end{theorem}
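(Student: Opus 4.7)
The plan is to exploit the construction of $TRDA$ as the teacher-proposing deferred acceptance algorithm applied to the related static problem $(I^t, S, C_{I^t}, \gg_S^t, q_S)$. Since teachers' choice functions are path-independent, they are substitutable, so the classical results for deferred acceptance apply and guarantee that its output $\mu^t$ is stable with respect to the derived priorities $\gg_S^t$: individually rational, non-wasteful, and free of all claims with respect to $\gg_S^t$. The task then reduces to translating these static properties into the three conditions of dynamic stability, namely dynamic rationality, non-wastefulness, and elimination of justified claims with respect to $>_S^t$.

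Two of these conditions are essentially immediate. Individual rationality and non-wastefulness depend only on teachers' choice functions and the quotas, and so transfer from the related static problem without modification. For the elimination of justified claims I would proceed by contradiction: if $(i,s)$ is a justified claim at $\mu^t$ witnessed by some $j \in \mu^t(s) \setminus \mu^{t-1}(s)$ with $i >_s^t j$, then a case split on whether or not $i \in \mu^{t-1}(s)$ shows that the derived priority still ranks $i$ above $j$, i.e., $i \gg_s^t j$. In the first case the ranking comes from part (i) of Definition \ref{defino violo prioridades}, and in the second case from part (iii). This turns $(i,s)$ into a claim with respect to $\gg_S^t$ at $\mu^t$, contradicting the stability of $\mu^t$ in the related static problem.

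The main obstacle is showing the tenure-preserving inequality $\mu^t(i) \succeq_i^B \mu^{t-1}(i)$ for each $i \in I^t_E$. I would argue by contradiction, setting $T := C_i(\mu^t(i) \cup \mu^{t-1}(i))$ and assuming $T \neq \mu^t(i)$. Consistency of $C_i$ together with individual rationality $\mu^t(i) = C_i(\mu^t(i))$ rules out $T \subseteq \mu^t(i)$, so there exists some $s \in T \setminus \mu^t(i)$, necessarily in $\mu^{t-1}(i)$, and hence $i \in \mu^{t-1}(s)$. Substitutability then yields $s \in C_i(\mu^t(i) \cup \{s\})$, so non-wastefulness of $\mu^t$ forces $|\mu^t(s)| = q_s$. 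On the priority side, since $i \in \mu^{t-1}(s)$, the derived priority $\gg_s^t$ ranks $i$ strictly above every teacher in $I^t \setminus \mu^{t-1}(s)$; combined with the absence of a claim $(i,s)$ with respect to $\gg_s^t$ at $\mu^t$, this forces $\mu^t(s) \subseteq \mu^{t-1}(s) \setminus \{i\}$. But then $q_s = |\mu^t(s)| \leq |\mu^{t-1}(s)| - 1 \leq q_s - 1$, a contradiction. This delicate interaction between the tenure-biased priority $\gg_s^t$ and the quota bound on $\mu^{t-1}$ is the crux of the argument; everything else is bookkeeping.
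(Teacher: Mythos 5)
Your proposal is correct and follows essentially the same route as the paper: run teacher-proposing deferred acceptance on the related static problem with the derived priorities $\gg_S^t$ (stability following from substitutability), and then transfer static stability to dynamic stability, which is exactly the content of the paper's Lemma \ref{estable en dinamico sii estable en estatico}. Your treatment of dynamic rationality is the same counting idea in contrapositive form (the paper exhibits a teacher $j\in\mu^t(s)\setminus\mu^{t-1}(s)$ and builds a claim with respect to $\gg_s^t$, while you use the absence of such a claim to force $\mu^t(s)\subseteq\mu^{t-1}(s)\setminus\{i\}$ and contradict the quota), and your use of consistency to rule out $C_i(\mu^t(i)\cup\mu^{t-1}(i))\subseteq\mu^t(i)$ cleanly covers the paper's second subcase.
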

\begin{proof}
    See Appendix \ref{apendice de pruebas}.
\end{proof}

A desirable property of a mechanism is efficiency. This efficiency can be considered either within or outside the class of dynamically stable matchings. In this section, we will study efficiency within the class of dynamically stable matchings. We say that a matching $\overline{\mu}$ Blair-dominates a matching $\mu$ if every teacher weakly Blair-prefers $\overline{\mu}$ to $\mu$, and at least one teacher strictly Blair-prefers $\overline{\mu}$ to $\mu$. Formally, $\overline{\mu}$  \textbf{Blair-dominates} $\mu$ if $\overline{\mu}(i) \succeq^B_i \mu(i)$ for each $i \in I^t$, and there is $j \in I^t$ such that $\overline{\mu}(j) \succ^B_j \mu(j)$. A mechanism $\overline{\varphi}$ Blair-dominates mechanism $\varphi$ if $\overline{\varphi}\neq \varphi$ and  $\overline{\varphi}(C_{I^t}, >_S^t) \succeq^B_{i} \varphi(C_{I^t}, >_S^t)$ for each $(C_{I^t}, >_S^t)$ and each $i\in I^t$. 
 Let $\mu_{DA}$ denote the outcome of the teacher-proposing deferred acceptance mechanism. As stated by \cite{Roth1984b}, in a static setting, $\mu_{DA}\succeq_i^B\mu$ for any other stable matching $\mu$ and each $i\in I.$ For the dynamic setting, let $\mu_{TR}$ denote the outcome of the $TRDA$ mechanism. Therefore, as a consequence of Theorem \ref{conjunto de estables dinamico no vacio}, we have that $\mu_{TR}\succeq_i^B\mu$ or any other dynamically stable matching $\mu$ and each $i\in I.$  In this case, we say that the outcome of the $TRDA$ is \textbf{(Blair) constrained-efficient} (within the class of dynamically stable matchings). Thus, we can formally state the following theorem.

\begin{theorem}\label{teorema pareto superioridad del mecanismo}
For the domain of dynamic problems in which teachers have path-independece choice functions, the outcome of the $TRDA$ is constrained-efficient.
\end{theorem}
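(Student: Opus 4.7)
The plan is to reduce the constrained-efficiency claim to the classical teacher-optimality of deferred acceptance in the related static problem $\mathcal{P}=(I^t, S, C_{I^t}, \gg_S^t, q_S)$. The core step is to show that every dynamically stable matching of $\mathcal{P}^t$ is stable in $\mathcal{P}$. Once this containment is established, since $\mu_{TR}$ is by construction the outcome of the teacher-proposing deferred acceptance algorithm applied to $\mathcal{P}$---and hence, by the classical result of \citet{Roth1984b} and \citet{Blair1988} for path-independent choice functions, the Blair-optimal stable matching of $\mathcal{P}$ for the teachers---one immediately obtains $\mu_{TR}(i) \succeq_i^B \mu(i)$ for every $i \in I^t$ and every $\mu \in \mathcal{DS}^t$.

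To prove the containment, fix $\mu \in \mathcal{DS}^t$. Individual rationality and non-wastefulness are literally the same conditions in both frameworks, so the only substantive step is to rule out claims with respect to $\gg^t_S$ at $\mu$. Suppose toward a contradiction that teacher $i$ has such a claim over school $s$: $s \notin \mu(i)$, $s \in C_i(\mu(i) \cup \{s\})$, and some $j \in \mu(s)$ satisfies $i \gg_s^t j$. Inspecting Definition \ref{defino violo prioridades}, the pair $(i, j)$ must fall into one of two admissible configurations (the case $i \notin \mu^{t-1}(s)$ together with $j \in \mu^{t-1}(s)$ yields $j \gg_s^t i$ and is thus impossible): either (a) $i, j \notin \mu^{t-1}(s)$ with $i >_s^t j$, or (b) $i \in \mu^{t-1}(s)$.

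In configuration (a), $i$'s claim over $s$ holds with respect to $>_s^t$ and is \emph{justified}, since $j \notin \mu^{t-1}(s)$, which contradicts the dynamic stability of $\mu$. Configuration (b) is the main obstacle, and it is here that path-independence is indispensable. From $i \in \mu^{t-1}(s)$ we have $s \in \mu^{t-1}(i)$, so the inclusions $\mu(i) \subseteq \mu(i) \cup \{s\} \subseteq \mu(i) \cup \mu^{t-1}(i)$ hold; combined with the dynamic-rationality identity $\mu(i) = C_i(\mu(i) \cup \mu^{t-1}(i))$, the consistency half of path-independence forces $C_i(\mu(i) \cup \{s\}) = C_i(\mu(i) \cup \mu^{t-1}(i)) = \mu(i)$, which does not contain $s$, contradicting the hypothesis $s \in C_i(\mu(i) \cup \{s\})$. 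This exhausts both configurations, establishes the containment, and the theorem then follows via the teacher-optimality of deferred acceptance in $\mathcal{P}$.
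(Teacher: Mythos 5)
Your proposal is correct and follows essentially the same route as the paper: the paper obtains the theorem from Lemma \ref{estable en dinamico sii estable en estatico} (the equivalence between dynamic stability in $\mathcal{P}^t$ and stability in the related static problem under the derived priorities $\gg_S^t$) combined with the Blair-optimality of the teacher-proposing deferred acceptance mechanism from \citet{Roth1984b}. You simply re-prove, inline, the one direction of that lemma you actually need (dynamically stable $\Rightarrow$ stable in $\mathcal{P}$), and your consistency-based handling of the case $i\in\mu^{t-1}(s)$ is a slightly more compact packaging of the paper's corresponding case analysis, but the argument is the same in substance.
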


Note that, given a dynamic problem $\mathcal{P}^t=(C_{I^t}, >_S^t),$ a dynamically stable matching may have unjustified claims, but not all dynamically stable matchings have the same amount of unjustified claims. Thus, given a dynamically stable matching $ \mu^t$, let $\mathcal{U}(\mu^t)$  be the set of unjustified claims with respect to $>_S^t$ at $\mu^t.$ We can classify the dynamically stable matchings with respect to the amount of unjustified claims. We say that a matching $\mu^t$ \textbf{minimizes unjustified claims} if (i) it is dynamically stable, and (ii) there is no dynamically stable matching $\overline{\mu}^t$ such that $\mathcal{U}(\overline{\mu}^t)\subsetneq \mathcal{U}(\mu^t)$.
     Notice that for a dynamic problem, there is always a matching that minimizes unjustified claims. This is because the set $\mathcal{DS}^{t}$ is finite and, by Theorem \ref{conjunto de estables dinamico no vacio}, non-empty. The following theorem presents an important property of the $TRDA$ mechanism.

\begin{theorem}\label{TRDA minumica reclamos injustificados}
  For the domain of dynamic problems in which teachers have path-independece choice functions, the outcome of the $TRDA$ mechanism minimizes unjustified claims.
\end{theorem}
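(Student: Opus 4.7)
My plan is to argue by contradiction, leveraging two facts already established: (i) $\mu_{TR}$ is stable in the related static problem $(I^t, S, C_{I^t}, \gg_S^t, q_S)$, as in the proof of Theorem~\ref{conjunto de estables dinamico no vacio}; and (ii) $\mu_{TR}$ Blair-dominates every dynamically stable matching by Theorem~\ref{teorema pareto superioridad del mecanismo}. Suppose toward contradiction that some $\overline{\mu}\in\mathcal{DS}^t$ satisfies $\mathcal{U}(\overline{\mu})\subsetneq\mathcal{U}(\mu_{TR})$, and pick $(i^*,s^*)\in\mathcal{U}(\mu_{TR})\setminus\mathcal{U}(\overline{\mu})$ with witness $j^*\in\mu_{TR}(s^*)\cap\mu^{t-1}(s^*)$ satisfying $i^*>_{s^*}^t j^*$. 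A preliminary structural observation is that $i^*\notin\mu^{t-1}(s^*)$: otherwise Definition~\ref{defino violo prioridades}(ii) would yield $i^*\gg_{s^*}^t j^*$, turning $(i^*,s^*)$ into a $\gg_S^t$-claim of $\mu_{TR}$ in the related problem and contradicting its stability there. Hence every unjustified claim of $\mu_{TR}$ arises from a non-tenured teacher being outranked under the derived priority by a tenured one at the same school.

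Because $\overline{\mu}$ is dynamically stable and $(i^*,s^*)\notin\mathcal{U}(\overline{\mu})$, at least one of the following must hold: (a) $s^*\in\overline{\mu}(i^*)$; (b) $s^*\notin C_{i^*}(\overline{\mu}(i^*)\cup\{s^*\})$; or (c) every $j\in\overline{\mu}(s^*)$ satisfies $j>_{s^*}^t i^*$. I would dispose of (a) and (b) directly via path-independence. In case (a), Blair-dominance gives $\mu_{TR}(i^*)=C_{i^*}(\mu_{TR}(i^*)\cup\overline{\mu}(i^*))$; since $s^*\in\overline{\mu}(i^*)$, the chain $\mu_{TR}(i^*)\subseteq\mu_{TR}(i^*)\cup\{s^*\}\subseteq\mu_{TR}(i^*)\cup\overline{\mu}(i^*)$ satisfies the hypothesis of consistency, forcing $C_{i^*}(\mu_{TR}(i^*)\cup\{s^*\})=\mu_{TR}(i^*)$, which contradicts $s^*\in C_{i^*}(\mu_{TR}(i^*)\cup\{s^*\})$. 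In case (b), evaluating $C_{i^*}(\mu_{TR}(i^*)\cup\overline{\mu}(i^*)\cup\{s^*\})$ by path-independence in two different associations produces a set that must simultaneously contain $s^*$ (grouping $\mu_{TR}(i^*)\cup\overline{\mu}(i^*)$ first and invoking $s^*\in C_{i^*}(\mu_{TR}(i^*)\cup\{s^*\})$) and exclude $s^*$ (grouping $\overline{\mu}(i^*)\cup\{s^*\}$ first and invoking $s^*\notin C_{i^*}(\overline{\mu}(i^*)\cup\{s^*\})$ together with $s^*\notin\mu_{TR}(i^*)$), a contradiction.

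The crux is case (c), where neither (a) nor (b) holds. Non-wastefulness applied to $i^*$ at both matchings then yields $|\mu_{TR}(s^*)|=|\overline{\mu}(s^*)|=q_{s^*}$, so since $j^*\in\mu_{TR}(s^*)\setminus\overline{\mu}(s^*)$ there exists $k\in\overline{\mu}(s^*)\setminus\mu_{TR}(s^*)$ with $k>_{s^*}^t i^*$. Blair-dominance together with $s^*\in\overline{\mu}(k)\setminus\mu_{TR}(k)$ forces $\mu_{TR}(k)$ and $\overline{\mu}(k)$ to be incomparable under set inclusion, so I may choose $s_1\in\mu_{TR}(k)\setminus\overline{\mu}(k)$. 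Substitutability applied to $s_1\in C_k(\mu_{TR}(k)\cup\overline{\mu}(k))=\mu_{TR}(k)$ yields $s_1\in C_k(\overline{\mu}(k)\cup\{s_1\})$, so non-wastefulness of $\overline{\mu}$ gives $|\overline{\mu}(s_1)|=q_{s_1}$. If some $j_1\in\overline{\mu}(s_1)$ satisfies $k>_{s_1}^t j_1$, then $(k,s_1)\in\mathcal{U}(\overline{\mu})$; and since $s_1\in\mu_{TR}(k)$ we also have $(k,s_1)\notin\mathcal{U}(\mu_{TR})$, producing the contradiction $\mathcal{U}(\overline{\mu})\not\subseteq\mathcal{U}(\mu_{TR})$. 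Otherwise all teachers in $\overline{\mu}(s_1)$ strictly beat $k$ under $>_{s_1}^t$, and I iterate by picking some $k_1\in\overline{\mu}(s_1)\setminus\mu_{TR}(s_1)$ with $k_1>_{s_1}^t k$ and repeating the construction at the pair $(k_1,s_1)$ in place of $(i^*,s^*)$.

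The main obstacle will be arguing that the displacement chain initiated in case (c) must terminate in the "productive" branch rather than cycle. I would handle this with a potential/monotone argument over the finite set $I^t\times S$: each iteration inspects a fresh pair $(k_{n},s_{n+1})$ with $s_{n+1}\in\mu_{TR}(k_n)\setminus\overline{\mu}(k_n)$ at which the strict priority jump $k_n>_{s_n}^t k_{n-1}$ prevents revisiting the same pair in the same role, and finiteness then forces the chain to reach an iteration where the desired unjustified claim $(k_n,s_{n+1})\in\mathcal{U}(\overline{\mu})\setminus\mathcal{U}(\mu_{TR})$ appears. The one-to-one argument of \cite{pereyra2013dynamic} is essentially the $q_s=1$ specialization of this chain, where schools are swapped singleton-at-a-time; the main technical effort here is in transporting the exchange to the many-to-many path-independent setting, where assignments are sets and incomparability must be read off Blair's partial order rather than from a single school comparison.
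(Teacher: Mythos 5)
Your cases (a) and (b) are correct and are essentially the paper's own argument: the paper disposes of the configurations $s\in\mu^t(i)$ and $s\notin C_i(\mu^t(i)\cup\{s\})$ exactly as you do, combining Blair-dominance (constrained-efficiency, Theorem~\ref{teorema pareto superioridad del mecanismo}) with path-independence/consistency, and your use of the elimination of justified claims to make the case split $(a)$--$(c)$ exhaustive is sound. The divergence is your case (c), the configuration in which every teacher in $\overline{\mu}(s^*)$ outranks $i^*$; you are right that it needs an argument (the paper's three-case list passes over it and never develops a chain), but your proposed resolution of it has a genuine gap.

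The gap is the termination of the displacement chain. The strict comparisons you generate are $k_{n+1}>_{s_{n+1}}^t k_n$, each taken at a \emph{different} school, so they do not define a monotone potential on $I^t\times S$: nothing prevents the chain from returning to a teacher--school pair it has already visited, since in the many-to-many setting a teacher may have several schools in $\mu_{TR}(k)\setminus\overline{\mu}(k)$ and the selections of $s_{n+1}$ and of $k_{n+1}\in\overline{\mu}(s_{n+1})\setminus\mu_{TR}(s_{n+1})$ are not canonical; finiteness of $I^t\times S$ alone therefore does not force the ``productive'' branch to be reached, and the claim that each iteration inspects a fresh pair is unsupported. (A smaller, repairable slip: from $k>_{s_1}^t j_1$ you conclude $(k,s_1)\in\mathcal{U}(\overline{\mu})$, which requires $j_1\in\mu^{t-1}(s_1)$; if $j_1\notin\mu^{t-1}(s_1)$ the claim is justified and contradicts dynamic stability of $\overline{\mu}$, so either way you are fine, but it should be said.) A cleaner way to close case (c), avoiding the chain altogether, is to use that both $\mu_{TR}$ and $\overline{\mu}$ are stable in the related static problem (Lemma~\ref{estable en dinamico sii estable en estatico}) and that the teacher-optimal stable matching is school-pessimal with respect to Blair's order on the school side: then the top $q_{s^*}$ teachers of $\overline{\mu}(s^*)\cup\mu_{TR}(s^*)$ under $\gg_{s^*}^t$ are exactly $\overline{\mu}(s^*)$, so every member of $\overline{\mu}(s^*)$ ranks above $j^*$ under $\gg_{s^*}^t$ and is therefore in $\mu^{t-1}(s^*)$; together with $j^*\in\mu^{t-1}(s^*)\setminus\overline{\mu}(s^*)$ and $|\overline{\mu}(s^*)|=q_{s^*}$ this yields $|\mu^{t-1}(s^*)|\geq q_{s^*}+1$, contradicting feasibility of $\mu^{t-1}$.
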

\begin{proof}
   See Appendix \ref{apendice de pruebas}.
\end{proof}


\section{Improving efficiency beyond the class of dynamically stable mechanisms}\label{seccion de eficiencia}

In the previous section, we prove that the $TRDA$ mechanism is constrained-efficient. If we allow matchings to have justified claims, i.e., matchings that are not dynamically stable, we can enhance the efficiency of matchings. In this section, we explore a mechanism whose outcome Blair-dominates the outcome of the $TRDA$ mechanism (when those outcomes are different).
It is well established that in static problems, the teacher-proposing deferred acceptance mechanism may leave room for improving efficiency beyond the stable matching class. Consequently, the $TRDA$ mechanism also leaves room for improving efficiency beyond the class of dynamically stable mechanisms. \cite{kesten2010school} proposes an adaptation of the $DA$ mechanism that yields a matching that is more efficient than the one produced by the teacher-proposing deferred acceptance mechanism, at the expense of losing stability. Such an improving mechanism is the Efficiency Adjusted Deferred Acceptance ($EADA$) mechanism. This mechanism may fail to produce a stable matching, as it seeks to improve efficiency by requiring teachers to consent to potential violations of their priorities. As a result, the outcome may be subject to claims.

In this section, we adapt the mechanism proposed by \cite{kesten2010school} for our dynamic many-to-many school choice model. 
This adaptation must take into account two structural aspects of the model: its many-to-many nature and its dynamic evolution over time. These dimensions are reflected in two key features of teachers—their tenured positions and their path-independent choice functions.  We refer to this modified mechanism as the \textbf{Tenure-Respecting Efficiency Adjusted Deferred Acceptance ($\boldsymbol{TREADA}$) mechanism}. So, in dynamic many-to-many problems where teachers have tenured positions, the $TREADA$ mechanism works as follows.
First, the mechanism determines a dynamically stable matching using the $TRDA$ mechanism and subsequently identifies interrupter pairs. According to \cite{kesten2010school}, a pair $(i, s)$ is called an \textbf{interrupter pair} at step $p'$ if teacher $i$ is tentatively assigned to school $s$ at some step $p < p'$, and is later rejected by school $s$ at step $p'$, \emph{provided that} at least one other teacher is rejected by $s$ at some step $l \in \{p, p+1, \dots, p'-1\}$.\footnote{In this case, we also say that teacher~$i$ is said to be an interrupter for school~$s$ at step~$p'$.} Second, for each interrupter pair $(i, s)$ in which teacher~$i$ consents to the violation of her priorities---in which case $i$ is called a \textbf{consenting teacher}---the \textsc{TREADA} mechanism modifies the choice function of teacher~$i$ by excluding all sets that contain school~$s$.
 The choice functions of other teachers remain unchanged. Once this modification is made, the $TRDA$ is rerun. Interrupter pairs are identified again, a new restriction of each consenting interrupter teacher choice function is performed, and a new round is conducted. This process continues until no interrupter pairs remain, and when all teachers consent it results in an \textbf{efficient} matching, i.e., a matching that is not Blair-dominated by any other matching. 


Therefore, before formally presenting the $TREADA$ mechanism, it is important to note that, in our school choice model where teachers have path-independent choice functions, given a pair $(i, s)$, we need to modify the choice function of teacher~$i$ in order to always selects subsets of schools not containing school $s$. Formally,
\begin{definition}\label{defino s-truncacion}
    Given a pair $(i,s)$ and a choice function $C_i$, the \textbf{$\boldsymbol{s$-truncation of $C_i}$} is the choice function $C^s_i: 2^S \to 2^{S\setminus\{s\}}$ such that
    $$C^s(S')=C(S'\setminus \{s\})$$ for each $S'\subseteq S.$
 \end{definition}   
   
This definition generalizes the preference truncation process introduced by \cite{Martinez2004} and further employed by \cite{bonifacio2022cycles}. Here, we prove that when a choice function is path-independent, the $s$-truncation inherits such a property. Formally,

\begin{lemma}\label{lema s-truncacion de C_i}
 Given a pair $(i,s)$ and a path-independent choice function $C_i$, the $s$-truncation $C^s_i$ is also path-independent.    
\end{lemma}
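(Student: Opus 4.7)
The plan is to verify the path-independence identity for $C_i^s$ by direct computation, reducing it to the path-independence of $C_i$ applied to $s$-free sets. Fix arbitrary $S',S''\subseteq S$; I must show
\[
C_i^s(S'\cup S'')=C_i^s\bigl(C_i^s(S')\cup S''\bigr).
\]
Unfolding the definition of the $s$-truncation, the left-hand side becomes $C_i\bigl((S'\cup S'')\setminus\{s\}\bigr)$, which I rewrite as $C_i\bigl((S'\setminus\{s\})\cup(S''\setminus\{s\})\bigr)$.

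For the right-hand side, the key observation is that $C_i^s(S')=C_i(S'\setminus\{s\})\subseteq S'\setminus\{s\}$, so $s\notin C_i^s(S')$. Therefore $\bigl(C_i^s(S')\cup S''\bigr)\setminus\{s\}=C_i(S'\setminus\{s\})\cup(S''\setminus\{s\})$, and the right-hand side equals
\[
C_i\bigl(C_i(S'\setminus\{s\})\cup(S''\setminus\{s\})\bigr).
\]
Now I apply the path-independence of $C_i$ (identity \eqref{propiedad 1}) with $A=S'\setminus\{s\}$ and $B=S''\setminus\{s\}$: this yields $C_i(A\cup B)=C_i(C_i(A)\cup B)$, which is exactly the equality of the two expressions above, completing the verification.

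The argument is essentially a bookkeeping exercise, so there is no real obstacle; the only point requiring care is the observation that $s\notin C_i^s(S')$, which is what allows the set-minus-$\{s\}$ operation to commute with the union inside the right-hand side. Everything else follows from substituting definitions and invoking path-independence of $C_i$ once.
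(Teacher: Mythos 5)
Your proof is correct and follows essentially the same route as the paper's: both reduce the identity for $C_i^s$ to a single application of path-independence of $C_i$ on the $s$-free sets $S'\setminus\{s\}$ and $S''\setminus\{s\}$, using the observation that $C_i^s(S')=C_i(S'\setminus\{s\})\subseteq S'\setminus\{s\}$ so that removing $s$ commutes with the union on the right-hand side. No issues.
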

\begin{proof}
  See Appendix \ref{apendice de pruebas}.   
\end{proof}

Now, we formally present the $TREADA$ mechanism in Table \ref{TREADA mechanism}.
\medskip
\begin{table}[h!]
    \centering    
    \begin{tabular}{l p{13cm}}\hline
     \textbf{Input:}  &A dynamic  problem $(I^t, S, C_{I^t}, >_S^t,q_S,\mu^{t-1})$. \\
       \textbf{Output:}  &A matching that Blair-dominates the output of the $TRDA$ mechanism.\vspace{10 pt}\\
  \textbf{Round 0}&Execute the $TRDA$ mechanism.\\
       \textbf{Round $\mathbf{u}$}&
       Identify the last step of the  $TRDA$ mechanism executed in round $u-1$ where a consenting teacher is rejected by the school for which he is an interrupter. Identify all interrupter pairs from that step involving a consenting teacher. If there are no interrupter pairs, stop. For each identified interrupter pair $(i, s)$, perform the $s$-truncation of $C_i$. Do not change other agents' choice functions. Call the new profile of choice functions for teachers $C^{u}_{I^t}$. Rerun the $TRDA$ mechanism for the dynamic problem $(I^t, S,C^{u}_{I^t}, >_S^t,q_S,\mu^{t-1})$.\\ \hline
    \end{tabular}
    \caption{The $TREADA$ mechanism}
    \label{TREADA mechanism}
\end{table}


The following example illustrates the application of the $TREADA$ mechanism to solve a dynamic problem where teachers have tenured positions and path-independent choice functions.
We show how interrupter pairs are identified and managed through truncation of choice functions, showing how eliminating these pairs affects the final matching outcome improving its efficiency.

\begin{example}\label{ejemplo del TREADA}
    Consider the following dynamic problem $(I^t, S, C_{I^t}, >_S^t,q_S,\mu^{t-1})$, where \\$I^t=\{i_1,i_2,i_3,i_4\}$, $S=\{s_1,s_2,s_3,s_4\}$, $q_{s_1}=2$ and $q_{s_\ell}=1$ for $\ell=2,3,4$ and $$\mu^{t-1}=\left(\begin{matrix}
        s_2 & s_3 & s_4\\
        i_1 & i_3 &i_2
    \end{matrix}\right).
    $$ Teachers' choice functions and schools' priorities are given in the following tables:
    
    \begin{center}
    \begin{tabular}{c|cccc}
       $2^{I^t}$  & $C_{i_1}$ & $C_{i_2}$ & $C_{i_3}$ & $C_{i_4}$\\ \hline
  $\{s_1,s_2,s_3,s_4\}$ & $\{s_4\}$ & $\{s_2,s_3\}$ & $\{s_4\}$ & $\{s_2,s_3\}$\\
  $\{s_1,s_2,s_3\}$     & $\{s_2\}$     & $\{s_2,s_3\}$     & $\{s_2\}$ & $\{s_2,s_3\}$\\
  $\{s_1,s_2,s_4\}$     & $\{s_4\}$     & $\{s_2\}$     & $\{s_4\}$     & $\{s_2\}$\\
  $\{s_1,s_3,s_4\}$     & $\{s_4\}$     & $\{s_3\}$     & $\{s_4\}$     & $\{s_3\}$\\
  $\{s_2,s_3,s_4\}$     & $\{s_4\}$         & $\{s_2,s_3\}$         & $\{s_4\}$     & $\{s_2,s_3\}$\\
  $\{s_1,s_2\}$         & $\{s_2\}$         & $\{s_2\}$         &  $\{s_2\}$            &  $\{s_2\}$          \\
  $\{s_1,s_3\}$         & $\{s_3\}$         & $\{s_3\}$         &  $\{s_3\}$            &   $\{s_3\}$         \\
  $\{s_1,s_4\}$         &  $\{s_4\}$                &   $\{s_4\}$               &     $\{s_4\}$         &  $\{s_4\}$          \\
  $\{s_2,s_3\}$         &    $\{s_2\}$    & $\{s_2,s_3\}$   &     $\{s_2\}$   &     $\{s_2,s_3\}$          \\
  $\{s_2,s_4\}$         &    $\{s_4\}$   &   $\{s_2\}$  & $\{s_4\}$     &   $\{s_2\}$      \\
  $\{s_3,s_4\}$         &     $\{s_4\}$      &    $\{s_3\}$     &    $\{s_4\}$  &  $\{s_3\}$ \\
  $\{s_1\}$             &      $\{s_1\}$       &    $\{s_1\}$      &   $\{s_1\}$    & $\{s_1\}$   \\
  $\{s_2\}$             &   $\{s_2\}$    &           $\{s_2\}$   &     $\{s_2\}$      & $\{s_2\}$   \\
  $\{s_3\}$             &    $\{s_3\}$   &   $\{s_3\}$   &   $\{s_3\}$   &  $\{s_3\}$    \\
  $\{s_4\}$             &     $\{s_4\}$   &      $\{s_4\}$  &   $\{s_4\}$  & $\{s_4\}$ \\
\end{tabular}\hspace{30pt}
            \begin{tabular}{cccc}
       $>_{s_1}^t$ & $>_{s_2}^t$  & $>_{s_3}^t$ & $>_{s_4}^t$ \\ \hline
       $i_4$  & $i_3$  & $i_1$ & $i_4$  \\
       $i_1$  & $i_2$ & $i_3$ & $i_1$\\
       $i_3$  & $i_1$ &  $i_2$ &$i_3$ \\
       $i_2$  & $i_4$ & $i_4$ & $i_2$\\
       &&&\\
        &&&\\
        &&&\\
        &&&\\
        &&&\\
        &&&\\&&&\\
        &&&\\
        &&&\\
        &&&\\
        &&&\\
        \end{tabular}
   \medskip
   \end{center}
The derived priorities that give rise to the related static problem in order to run the $TRDA$ mechanism are given in the following table 
    \begin{center}
        \begin{tabular}{cccc}
         $\gg_{s_1}^t$ & $\gg_{s_2}^t$  & $\gg_{s_3}^t$ & $\gg_{s_4}^t$ \\ \hline
       $i_4$  & $i_1$  & $i_3$ & $i_2$  \\
       $i_1$  & $i_3$ & $i_1$ & $i_4$\\
       $i_3$  & $i_2$ &  $i_2$ &$i_1$ \\
       $i_2$  & $i_4$ & $i_4$ & $i_3$\\
        
        \end{tabular}
    \end{center}

\medskip
\noindent Now, we apply the $TREADA$ mechanism. 

\noindent \textbf{Round $\mathbf{0}$:} The $TRDA$ mechanism is applied. 
\begin{center}

 \begin{tabular}{ccccc}
      & $s_1$ & $s_2$  & $s_3$& $s_4$  \\ \hline
      {Step 1} &  & $i_2$ \bcancel{$i_4$} & $i_2$ \bcancel{$i_4$} & $i_1$ \bcancel{$i_3$} \\
{Step 2} &  & \bcancel{$i_2$} $i_3$ & $i_2$ & \bcancel{$i_1$} $i_4$ \\
{Step 3} &  & \bcancel{$i_3$} $i_1$ & $i_2$ & $i_4$ \\ 
{Step 4} &  & $i_1$ & \bcancel{$i_2$} $i_3$ & $i_4$ \\
{Step 5} &  & $i_1$ & $i_3$ & \bcancel{$i_4$} $i_2$ \\
{Step 6} & $i_4$ & $i_1$ & $i_3$ & $i_2$ \\   
    \end{tabular}
  
\end{center}

\noindent Thus, the output of Round $0$ is the matching $$ \mu^0 = 
    \begin{pmatrix}
    i_1 & i_2 & i_3 & i_4 \\
    s_2 & s_4 & s_3 & s_1
    \end{pmatrix}$$

\noindent \textbf{Round $\mathbf{1}$:} The pair $(i_4, s_4)$ is an interrupter within the algorithm, as teachers $i_1$ was rejected while teacher $i_4$ was tentatively placed in school $s_4$. Then, consider the  $s_4$-truncation of $C_{i_4}$.
The execution of the corresponding $TRDA$ mechanism for the revised problem is presented in the following table.

\begin{center}
   
    \begin{tabular}{ccccc}
      & $s_1$ & $s_2$  & $s_3$ & $s_4$ \\ \hline
    {Step 1} & {$i_4$} & {$i_2$ \bcancel{$i_4$}} & {$i_2$ \bcancel{$i_4$}} & {$i_1$ \bcancel{$i_3$}} \\
    {Step 2} & {$i_4$} & {\bcancel{$i_2$} $i_3$} & {$i_2$} & {$i_1$} \\
    {Step 3} & {$i_4$} & {$i_3$} & {$i_2$} & {$i_1$} \\
    \end{tabular}
    
\end{center}
\noindent Thus, the output of Round $1$ is  matching $$\mu^1=\left(\begin{matrix} i_1 & i_2 & i_3 & i_4 \\ 
s_4 & s_3 & s_2 & s_1 \end{matrix}\right).$$ 

\noindent \textbf{Round $\mathbf{2}$:} The pair $(i_2, s_2)$ is an interrupter since teacher $i_4$ was rejected while teacher $i_2$ was tentatively placed in school $s_2$. Then, consider the revised problem where we perform the $s_2$-truncation of $C_{i_2}$.
The execution of the corresponding $TRDA$ mechanism for the revised problem is presented in the following table.
\begin{center}
        \begin{tabular}{ccccc}
      & $s_1$ & $s_2$  & $s_3$ & $s_4$ \\ \hline
        {Step 1} & { } & {$i_4$} & {$i_2$ \bcancel{$i_4$}} & {$i_1$ \bcancel{$i_3$}} \\
    {Step 2} & { } & {\bcancel{$i_4$} $i_3$} & {$i_2$} & {$i_1$} \\
    {Step 3} & {$i_4$} & {$i_3$} & {$i_2$} & {$i_1$} \\
    \end{tabular}
   
\end{center}
Note that, there are no interrupter pairs in this round, thus the mechanism stops obtaining matching
$$\mu^2 = \left(\begin{matrix} i_1 & i_2 & i_3 & i_4 \\ s_4 & s_3 & s_2 & s_1 \end{matrix}\right).$$ \hfill$\Diamond$

\end{example}



This way, by applying the $TREADA$ mechanism for a dynamic problem, we can obtain a matching that Blair-dominates the outcome of the $TRDA$ mechanism. The extent of the improvement will depend on the consent of the teachers. So, the efficiency, beyond the class of the dynamically stable matchings, will be reached when all teachers consent. Formally,

\begin{theorem}\label{teorema TREADA eficiente}
The $TREADA$ mechanism Blair-dominates the $TRDA$ mechanism as well as any other dynamically stable mechanism. When no teacher consents, $TREADA$ and $TRDA$ mechanisms yield the same outcomes. However, if all teachers consent, the outcome of the $TREADA$ mechanism achieves efficiency. 
\end{theorem}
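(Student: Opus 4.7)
The theorem contains three assertions: (i) the $TREADA$ mechanism Blair-dominates the $TRDA$ mechanism and every other dynamically stable mechanism; (ii) with no consent, $TREADA$ coincides with $TRDA$; and (iii) with full consent, the $TREADA$ outcome is efficient in the Blair order. The plan is to handle (i) and (ii) via a round-by-round analysis of the algorithm, and (iii) by contradiction, extracting an unprocessed interrupter pair from any hypothetical Blair-dominating deviation.

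\textbf{A monotonicity lemma driving (i) and (ii).} The central auxiliary result to establish first is the following: if $\mu^{u-1}$ denotes the matching produced at the end of round $u-1$ of $TREADA$ and $(i,s)$ is a consenting interrupter pair identified at the start of round $u$, then the outcome $\mu^u$ of the $TRDA$ run on the profile obtained by replacing $C_i$ with its $s$-truncation $C^s_i$ satisfies $\mu^u(j) \succeq^B_j \mu^{u-1}(j)$ for every $j \in I^t$. The argument is the path-independent analogue of the standard $EADA$ logic of \citet{kesten2010school}. Since $(i,s)$ is an interrupter, teacher $i$ is ultimately rejected by $s$ in the round $u-1$ execution, so removing $s$ from her applications does not worsen her own Blair comparison with $\mu^{u-1}(i)$; here Lemma \ref{lema s-truncacion de C_i} is crucial to guarantee that $C^s_i$ remains path-independent, so that $TRDA$ continues to enjoy its usual optimality properties. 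Simultaneously, the rejection chain triggered by $i$'s temporary acceptance at $s$ no longer fires, freeing a position for the teacher who had been rejected during the critical window; the Blair-lattice structure of \citet{Blair1988} combined with Theorem \ref{teorema pareto superioridad del mecanismo} then propagates this improvement so that no teacher is worse off. Iterating across the finitely many rounds of $TREADA$ (termination holds because each round strictly shrinks the collection of subsets at which consenting teachers may still be assigned to their interrupter school), and using that $TRDA$ already Blair-dominates every dynamically stable mechanism, yields (i). Assertion (ii) is then immediate from the stopping rule: when no teacher consents, round $1$ identifies no interrupter pairs to act upon and the algorithm returns the $TRDA$ outcome.

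\textbf{Efficiency under full consent.} For (iii), let $\mu^{\ast}$ be the terminal $TREADA$ outcome when every teacher consents and suppose, towards a contradiction, that some matching $\nu$ satisfies $\nu(j) \succeq^B_j \mu^{\ast}(j)$ for every $j \in I^t$ and $\nu(j_0) \succ^B_{j_0} \mu^{\ast}(j_0)$ for some $j_0$. The plan is to pull $\nu$ back into the last $TRDA$ execution of $TREADA$ and extract an interrupter pair that the algorithm failed to process. Concretely, I would select a school $s_0 \in \nu(j_0) \setminus \mu^{\ast}(j_0)$ with $s_0 \in C_{j_0}(\mu^{\ast}(j_0) \cup \{s_0\})$, trace the rejection chain of $j_0$ at $s_0$ backwards through the steps of the last $TRDA$ execution, and locate a pair $(i',s')$ at which some teacher is tentatively held and then rejected with at least one intermediate rejection. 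This pair meets the interrupter definition; since all teachers consent, it would have triggered another round of $TREADA$, contradicting termination. Path-independence, preserved through all truncations by Lemma \ref{lema s-truncacion de C_i}, is what allows the Blair-strict improvement of $\nu$ to be converted into a concrete interrupter witness inside the chain.

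\textbf{Main obstacle.} The principal technical difficulty lies in the efficiency argument of (iii). In the many-to-one, responsive world of \citet{kesten2010school}, rejection chains are linear and a Pareto-improvement traces immediately to a single uncaught interrupter. In our many-to-many, path-independent setting, teachers propose whole sets $C_i(\cdot)$ and their comparisons live only in the Blair partial order rather than a ranking, so the chain argument must be recast to cope with set-valued proposals and to translate a Blair-strict deviation into an interrupter witness. Carefully decomposing a Blair-dominating deviation $\nu$ into school-by-school effects using path-independence, and certifying that one such effect matches the interrupter definition in the final round, is where the bulk of the work will be concentrated.
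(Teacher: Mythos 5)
Your decomposition mirrors the paper's exactly (a round-by-round Blair-monotonicity lemma for parts (i)--(ii), and an ``uncaught interrupter'' contradiction for (iii)), but both key steps are left with genuine gaps. For the monotonicity lemma, your justification---that truncation does not hurt the interrupter herself and that ``the Blair-lattice structure combined with Theorem \ref{teorema pareto superioridad del mecanismo} then propagates this improvement''---does not work as stated: Theorem \ref{teorema pareto superioridad del mecanismo} compares dynamically stable matchings of a \emph{single} problem, whereas here you must compare the $TRDA$ outcomes of two \emph{different} problems (the original one and the one in which $C_i$ is replaced by $C^s_i$), and Blair's lattice result is likewise a within-problem statement. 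The paper's Lemma \ref{lema 2 del eada} establishes the needed comparative static directly by a rejection-chain argument: supposing some teacher is Blair-worse off in round $r$, it builds a sequence of teachers each newly accepted at a school vacated by the previous one, singles out the \emph{first} teacher who proposes to a Blair-worse set, and derives a contradiction in both cases (non-interrupter, or interrupter whose next-best set $S^\star$ after truncation still satisfies $S^\star \succeq^B \mu_{r-1}(i_k)$ because she was rejected by the truncated school anyway). Some argument of this type is indispensable; your sketch replaces it with an appeal that does not apply.

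For (iii), your plan to ``trace the rejection chain of $j_0$ at $s_0$ backwards through the last $TRDA$ execution'' can fail outright: the school $s_0$ improving $j_0$ under $\nu$ may be one for which $j_0$ was herself an interrupter in an earlier round, so $s_0$ was truncated from her choice function and she never applies to (hence is never rejected by) $s_0$ in the final round, and no interrupter pair can be extracted from that trace. This is precisely why the paper's Lemma \ref{lema 3 del eada} first proves, by induction over rounds, that a Blair-dominating matching $\nu$ assigns no interrupter to a school for which she is an interrupter, and only then runs the final-round argument. Moreover, in the many-to-many setting the witness does not come from a single backward trace but from a \emph{cycle} of teachers and capacity-filled schools generated by Blair-domination (each teacher adding a school occupied under the current $TRDA$ outcome by the next teacher in the cycle), with the interrupter obtained by examining the last teacher in the cycle to apply to her school. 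You correctly identify this as the hard part, but the two missing ingredients---the no-assigned-interrupter induction and the cycle construction---are exactly the substance of the paper's proof, so the proposal as written has a real gap rather than a complete alternative route. Parts (ii) and the reduction of (i) to the monotonicity lemma plus Theorem \ref{conjunto de estables dinamico no vacio} are fine.
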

\begin{proof}
   The proof is presented in Appendix \ref{apendice de pruebas} using Lemmata  \ref{lema 2 del eada} and \ref{lema 3 del eada}.
\end{proof}

\section{Not obviously manipulability in dynamic problems}\label{Seccion de manipulation no obvia}

In this section, we analyze the potential for manipulation by teachers under the mechanisms introduced in this paper. While the mechanisms were initially defined assuming that teachers report path-independent choice functions, we now consider a setting in which teachers instead report preferences over subsets of schools. This alternative formulation is particularly useful for studying strategic behavior, as it allows us to frame manipulation in terms of misreported preferences.

Importantly, this change in the domain of reports does not affect the applicability of our dynamic analysis. The results established in the previous sections remain valid when teachers report substitutable preferences instead of path-independent choice functions. This equivalence is justified by well-established connections between the two domains. Specifically, a choice function is path-independent if and only if it is both substitutable and consistent \citep[see][for more details]{aizerman1981general,chamb2017choice}. Furthermore, any substitutable preference relation induces a choice function that satisfies both substitutability and consistency \citep[see][for more details]{martinez2008invariance}. Therefore, restricting attention to substitutable preferences preserves the essential structure of the problem while providing a more natural framework for analyzing manipulation.
\footnote{As observed in \cite{martinez2008invariance}, a preference relation induces a choice function in a natural way. Given a preference $\succ_i$ and a subset $S' \subseteq S$, the corresponding choice function is defined as $C_i(S') = \max_{\succ_i} { S'' \subseteq S' }$, i.e., the most preferred subset of $S'$ according to $\succ_i$.}

Since we work directly with preferences over sets of schools, the relevant notion of dominance in this setting is that of Pareto-domination, rather than Blair-domination over choice functions that we previously used.\footnote{A matching $\overline{\mu}$ \textbf{Pareto-dominates} $\mu$ if $\overline{\mu}(i) \succeq_i \mu(i)$ for each $i \in I^t$, and there is $j \in I^t$ such that $\overline{\mu}(j) \succ_j \mu(j)$. A mechanism $\overline{\varphi}$ Pareto-dominates a mechanism $\varphi$ if $\overline{\varphi} \neq \varphi$ and $\overline{\varphi}(\succ_{I^t}, >S^t) \succeq_{i} \varphi(\succ_{I^t}, >S^t)$ for each $(\succ{I^t}, >_S^t)$ and each $i \in I^t$.} Pareto-domination allows us to evaluate outcomes based on teachers' reported preferences in a way that is both intuitive and compatible with the strategic framework we consider. This shift in focus from Blair-domination to Pareto-domination reflects our preference-based formulation and facilitates a more natural analysis of the incentive properties of the mechanisms.

In the context of static many-to-many matching problems, \cite{martinez2004group} show that the substitutability condition on teachers' preferences is insufficient to prevent a teacher from benefiting from misreporting her preferences under the teacher-proposing deferred acceptance mechanism. This finding highlights that, beyond stability and efficiency, the (non-)manipulability of a mechanism plays a central role in the two-sided matching literature.

In our setting, due to the nature of schools' priorities, only teachers can behave strategically. This issue has also been explored in many-to-many contexts by \cite{sakai2011note, hirata2017stable, iwase2022equivalence}, among others. Here, we focus on \emph{dynamically stable-dominating mechanisms}, which are either dynamically stable mechanisms or mechanisms that (teacher) Pareto-dominate a dynamically stable one.

A teacher manipulates a mechanism if there is a scenario in which she can achieve a better outcome by declaring preferences that differ from her true ones. As mentioned in the Introduction, it is well-established that in the many-to-one matching model with substitutable preferences, any stable mechanism may be vulnerable to manipulation. Consequently, stable mechanisms are also susceptible to manipulation in our many-to-many static context and, by extension, in our dynamic many-to-many context. Since manipulation cannot be fully eliminated, we focus on dynamically stable-dominating mechanisms that at least prevent obvious manipulations, as defined by \cite{troyan2020obvious}.

A manipulation is considered \emph{obvious} if agents can easily recognize and execute a beneficial misrepresentation of their true preferences. In models with substitutable preferences, the notion of non-obvious manipulation is studied by \cite{arribillaga2023obvious} for static many-to-one problems and by \cite{arribillaga2024many-yo-many-obvious} for static many-to-many problems.



To describe the term ``obvious" in static problems, it is necessary to specify how much information each agent has about other agents' preferences. \cite{troyan2020obvious} assume that each agent has complete ignorance in this respect and therefore focus on all outcomes that the mechanism can choose given its own report. A manipulation is considered obvious if the best possible outcome under the manipulation is strictly better than the best possible outcome under truth-telling, or if the worst possible outcome under the manipulation is strictly better than the worst possible outcome under truth-telling. \cite{troyan2020obvious} prove that any stable-dominating mechanism is not obviously manipulable in a many-to-one setting.


 A mechanism $\varphi$ is said to be not obviously manipulable if it does not admit obvious manipulations.
 In a many-to-many static problem with substitutable preferences, \cite{arribillaga2024many-yo-many-obvious} proved that any stable-dominating mechanism is not obviously manipulable. Formally, 

 \begin{theorem}[\citealp{arribillaga2024many-yo-many-obvious}]\label{teorema 2 arribi}Facts about not obvious manipulability:

 \begin{enumerate}[(i)]
     \item The teacher-proposing deferred acceptance mechanism is not obviously manipulable.
     \item Any mechanism that Pareto-dominates the teacher-proposing deferred acceptance mechanism is not obviously manipulable.
 \end{enumerate}
   \end{theorem}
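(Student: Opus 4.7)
The plan is to verify the two inequalities in the non-obvious manipulability criterion of \cite{troyan2020obvious}: for every teacher $i$, every truthful preference $\succ_i$, and every misreport $\succ_i'$, the $\succ_i$-best outcome over all profiles $\succ_{-i}$ of the other teachers under $\varphi$ when $i$ reports truthfully must weakly dominate the $\succ_i$-best outcome under misreporting, and the analogous inequality must hold for $\succ_i$-worst outcomes. I would proceed by exhibiting explicit profiles $\succ_{-i}^{\text{best}}$ and $\succ_{-i}^{\text{worst}}$ that realize the truthful best and worst cases, and then bound the extremes achievable under any misreport.

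For part (i), the best-case analysis is direct: letting $\succ_{-i}^{\text{best}}$ be a profile in which every other teacher finds every school unacceptable, the teacher-proposing DA leaves $i$ as the sole applicant at each school in $C_i(S)$ and assigns her exactly $C_i(S)$, her $\succ_i$-top choice. Since no outcome is strictly $\succ_i$-preferred to $C_i(S)$, the truthful best case cannot be improved upon by misreporting, and condition (a) follows. For the worst case, one constructs $\succ_{-i}^{\text{worst}}$ by placing enough higher-priority teachers who exclusively want the schools in $C_i(S)$, which forces $i$ to be unmatched; hence the truthful worst case is $\emptyset$. A misreport $\succ_i'$ either preserves the set of acceptable schools---in which case an analogous profile yields $\emptyset$ as well---or declares a new school $s$ acceptable, in which case there exists a profile under which $\varphi(\succ_i',\succ_{-i})(i)$ contains $s$, an outcome strictly $\succ_i$-worse than $\emptyset$. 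Condition (b) follows.

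For part (ii), let $\varphi$ Pareto-dominate DA. Pareto-dominance gives $\varphi(\succ_i,\succ_{-i})(i) \succeq_i DA(\succ_i,\succ_{-i})(i)$ for every $\succ_{-i}$, and since DA is individually rational at truthful reports, so is $\varphi$. At $\succ_{-i}^{\text{best}}$, DA outputs $C_i(S)$; because $C_i(S)$ is $\succ_i$-maximal, $\varphi$ must output $C_i(S)$ as well, so the truthful best case under $\varphi$ equals $C_i(S)$ and (a) follows exactly as in part (i). For (b), the individual rationality of $\varphi$ at truthful reports guarantees that its worst case under truth is $\succeq_i \emptyset$, while the misreport construction from part (i) still produces a strictly $\succ_i$-worse outcome under $\varphi$, because $\varphi$ is required to respect individual rationality only with respect to the \emph{reported} preferences.

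The main obstacle is handling misreports that do not enlarge the set of acceptable schools, where both truthful and misreported worst cases are weakly $\succ_i$-above $\emptyset$. Here one must construct, for each such misreport, a profile $\succ_{-i}$ under which the truthful outcome is $\succ_i$-below the worst misreported outcome---a matching-theoretic argument exploiting substitutability and path-independence of choice functions, which constitutes the technical heart of \cite{arribillaga2024many-yo-many-obvious}.
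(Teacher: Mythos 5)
There is nothing in the paper to compare your attempt against: Theorem \ref{teorema 2 arribi} is stated as an imported result, attributed to \cite{arribillaga2024many-yo-many-obvious}, and the paper gives no proof of it (it is only used as a black box in the proof of Theorem \ref{teorema TREADA es NOM}). So your proposal has to stand on its own, and as written it does not. The best-case half is fine: with all other teachers reporting every school unacceptable, the teacher-proposing DA gives $i$ her $\succ_i$-maximal set $C_i(S)$, and Pareto-domination transfers this to any dominating mechanism. The worst-case half, however, rests on a false intermediate claim. You assert that the truthful worst case is $\emptyset$, ``constructed by placing enough higher-priority teachers who exclusively want the schools in $C_i(S)$.'' But the set of other teachers and the schools' priorities are fixed; only the others' preference reports vary in the option set $O^t_\varphi(\succ_i)$. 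If $i$ has top priority at some acceptable school $s$ (or if the other teachers are too few to fill the quotas of her acceptable schools), she can never be rejected by $s$, so every truthful outcome is a nonempty acceptable set and the truthful worst case is strictly above $\emptyset$. In that situation your argument for condition (b) collapses for both (i) and (ii): it no longer suffices to exhibit a misreport outcome below $\emptyset$, and for misreports that list only truly acceptable schools you give no comparison at all.

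You acknowledge this in your closing paragraph, but that concession is precisely the theorem: establishing that for \emph{every} misreport the worst outcome under the misreport is weakly $\succ_i$-below the worst outcome under truth (note your sentence states the required inequality in the reverse direction, which would show the manipulation \emph{is} obvious) is the entire content of the worst-case half of \cite{troyan2020obvious} and of its extension to many-to-many markets with substitutable preferences in \cite{arribillaga2024many-yo-many-obvious}. Deferring it to that same reference means the proposal is a restatement of the easy half plus a citation of the hard half, not a proof. In part (ii) there is the additional unargued step that some profile forces an arbitrary Pareto-dominating mechanism $\varphi$ to actually assign the falsely declared school; Pareto-domination of DA plus individual rationality with respect to the \emph{reported} preferences does not by itself deliver such a profile.
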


If we turn our attention back to the dynamic setting, in order to formally define the notions of dynamic manipulation and obvious dynamic manipulation, we must first introduce the notion of an economy, which is essentially a sequence of dynamic problems. Formally, an \textbf{economy}  $$\mathcal{E}=(\{I^t\}_t, S, \{\succ_{I^t}\}_t, \{>_S^t\}_t,q_S,\mu^0,\varphi)$$ is defined by a sequence of sets of teachers $\{I^t\}_t$, the set of schools $S$, a sequence of preference profiles for the set of teachers $\{\succ_{I^t}\}_t$, a sequence of priorities profiles for the set of schools $\{>_S^t\}_t$, a  profile of quotas $q_S$,  an initial matching $\mu^{0}$, and a mechanism $\varphi.$ This mechanism generates a matching at each period. We assume that schools’ priorities over teachers remain fixed over time, that is, for each $s \in S$ and any two periods $t$ and $t'$, $i >_s^t j$ if and only if $i >_s^{t'} j$ for all $i, j \in I^t \cap I^{t'}$. Moreover, teachers report their preferences upon entering the economy and do not revise them over time.\footnote{Although this is a strong assumption, in Section \ref{seccion de discusion} we discuss the case in which teachers are allowed to revise their preferences over time.}

 To carry out the analysis of manipulability of the mechanisms, we introduce the domain of substitutable preferences, which we denote by $\mathcal{D}$.
 
Now, for a dynamic context, let $\varphi$ be a mechanism, and let $i \in I^t$ be an agent that enters the problem at period $t$ with true preference $\succ_i\in \mathcal{D}$. We say that  $\succ'_i\in \mathcal{D}$  is a \textbf{dynamic manipulation of $\boldsymbol{\varphi$ at $\succ_i}$} if in some period $t'\geq t$ we have that
$$\varphi_i (\succ'_i, \succ_{I^{t'}\setminus\{i\}},>_S^{t'} )\succ_i\varphi_i(\succ_i, \succ_{I^{t'}\setminus\{i\}},>_S^{t'} ).$$ 
To formally define when a manipulation is obvious in a dynamic problem, we need to define the \textbf{option set left open by 
$\boldsymbol{\succ_i$ at $\varphi$ in period $t}$} as
\[
O^t_\varphi(\succ_i) = \{\varphi_i(\succ_i, \succ_{I^{t}\setminus\{i\}}, >_S^t)~ :~ \succ_{I^{t}\setminus\{i\}} \in  \mathcal{D}^{|I^t\setminus\{i\}|}  \}.
\]
The following definition adapts the concept of obvious manipulation to our dynamic context.

\begin{definition}\label{defino nom}
Let $\varphi$ be a mechanism, and let $i \in I^t$ be an agent that enters the problem at period $t$ with true preference $\succ_i\in  \mathcal{D}$. Let $\succ'_i\in  \mathcal{D}$ be a dynamic manipulation of $\varphi$ at $\succ_i$ in some period $t'\geq t$.
Then, $\succ'_i$ is a \textbf{obvious dynamic manipulation} if
\[
W_i(O^{t'}_\varphi(\succ'_i)) \succ_i W_i(O^{t'}_\varphi(\succ_i))
\]
or
\[
C_i(O^{t'}_\varphi(\succ'_i)) \succ_i C_i(O^{t'}_\varphi(\succ_i)).
\]
A mechanism $\varphi$ is \textbf{not obviously dynamically manipulable} if it does not admit any obvious dynamic manipulation. Otherwise, $\varphi$ is \textbf{obviously dynamically manipulable}.
\end{definition} 


The following example shows that, in dynamic problems with substitutable preferences
, the $TRDA$  mechanism has an obvious dynamic manipulation.

\begin{example}
    Consider a two-period economy where   
$\mu^{0}{=\left(\begin{matrix}
 i_1 & i_2  &  i_3  \\
s_2 &s_4 & s_3 
\end{matrix}\right)},$ 
$I^1=\{i_1,i_2,i_3,i_4\}$ (and thus $I_E^1=\{i_1,i_2,i_3\}$), $S=\{s_1,s_2,s_3,s_4\}$, $q_1=2$, $q_\ell=1$ for $\ell=2,3,4$, and the mechanism considered is $TRDA$. 
Teachers' preferences and schools' priorities  at period $1$ are given by: 

\begin{center}
    \begin{tabular}{cccc}
       $\succ_{i_1}$  & $\succ_{i_2}$ & $\succ_{i_3}$ & $\succ_{i_4}$ \\ \hline
       $s_4$ & $s_2s_3$  &  $s_4$ &$s_2s_3$\\
       $s_2$& $s_2$    &   $s_2$ &$s_2$ \\
       $s_3$ & $s_3$   &  $s_3$ &$s_3$ \\
      $s_1$ &   $s_4$  & $s_1$ &$s_4$\\
      &   $s_1$         &    &  $s_1$
    
    \end{tabular}
\hspace{50pt}
          \begin{tabular}{cccc}
       $>_{s_1}^1$ & $>_{s_2}^1$  & $>_{s_3}^1$ & $>_{s_4}^1$ \\ \hline
      $i_1$  & $i_1$  & $i_1$ & $i_2$  \\
       $i_2$  & $i_3$ & $i_3$ & $i_4$\\
       $i_3$  & $i_2$ &  $i_2$ &$i_1$ \\
       $i_4$  & $i_4$ & $i_4$ & $i_3$\\

       &&&
    \end{tabular}
      \end{center} 
\medskip

\noindent Then, the outcome of the $TRDA$ mechanism at period $1$ is: 
$$\mu^1{=\left(\begin{matrix}
 i_1 & i_2  & i_3 & i_4  \\
s_2 & s_4 & s_3 & s_1 
\end{matrix}\right)}.$$
Let us assume that $I^{2}=\{i_1,i_3,i_4,i_5\}$ and teachers' preferences and schools' priority at period $2$ are given by:

 \begin{center}
    \begin{tabular}{cccc}
       $\succ_{i_1}$  & $\succ_{i_3}$ & $\succ_{i_4}$ & $\succ_{i_5}$ \\ \hline
       $s_4$ &   $s_4$ &$s_2s_3$&$s_2s_3$ \\
       $s_2$&   $s_2$ &$s_2$ &$s_2$    \\
       $s_3$ &   $s_3$ &$s_3$&$s_3$   \\
      $s_1$   & $s_1$ &$s_4$&   $s_4$\\
      &          &  $s_1$  &  $s_1$
    
    \end{tabular}
\hspace{50pt}
             \begin{tabular}{cccc}
       $>_{s_1}^{2}$ & $>_{s_2}^{2}$  & $>_{s_3}^{2}$ & $>_{s_4}^{2}$ \\ \hline
      $i_1$  & $i_1$  & $i_1$ & $i_5$  \\
       $i_3$  & $i_3$ & $i_3$ & $i_4$\\
       $i_5$  & $i_4$ &  $i_4$ &$i_1$ \\
       $i_4$  & $i_5$ & $i_5$ & $i_3$\\
       &&&
    \end{tabular}
         \end{center} 
 
\noindent Then, the outcome of the $TRDA$ mechanism at period $2$ is: 
$$\mu^{2}{=\left(\begin{matrix}
 i_1 & i_3  & i_4 & i_5  \\
s_2 & s_3 & s_1 & s_4 
\end{matrix}\right)}.$$
Assume that teacher $i_4$ reveals preference $\succ_{i_4}':s_2s_3, s_2, s_3, s_1, s_4$ instead of revealing her true preference.\footnote{$\succ_{i_4}':s_2s_3, s_2, s_3,  s_1, s_4$ indicates that $  s_2s_3\succ_{i_4}' s_2,$  $s_2\succ_{i_4}' s_3,$ and so-forth.}  Then, the matchings generated by the $TRDA$ mechanism in periods $1$ and $2$ are: 
$$\overline{\mu}^{1}{=\left(\begin{matrix}
 i_1 & i_2  & i_3 & i_4  \\
s_4 & s_3 & s_2 & s_1 
\end{matrix}\right)}\hspace{20pt} \text{ and }\hspace{20pt}
\overline{\mu}^{2}{=\left(\begin{matrix}
 i_1 & i_3 & i_4 & i_5  \\
s_4 & s_2 & s_3 & s_1 
\end{matrix}\right)}.$$
Since $\overline{\mu}^{2}(i_4)=s_3\succ_{i_4}s_1=\mu^{2}(i_4)$, then teacher $i_4$ can benefit by misrepresenting her preferences. 
To see that $\succ'_{i_4}$ is an obvious manipulation, it is enough to show that the worst outcome that teacher $i_4$ can obtain by misrepresenting her preference is preferred, according to their true preference, than the worst outcome that teacher $i_4$ can obtain when she reveals her true preference. Suppose that teacher $i_4$  declares misrepresenting preference $\succ'_{i_4}$, while the other teachers declare the following preferences when entering the problem:
\begin{center} 
\begin{tabular}{cccc} $\succ_{i_1}$ & $\succ_{i_2}$ & $\succ_{i_3}$ & $\succ_{i_5}$ \\ \hline $s_1$ & $s_1$ & $s_2,s_3$ & $s_1$ \\ \vdots & \vdots & \vdots & \vdots  \end{tabular}
\end{center}
Then, the matchings generated in periods $1$ and $2$ are:
 $$\overline{\mu}^{1}{=\left(\begin{matrix}
 i_1 & i_2  & i_3 & i_4  \\
s_1 & s_1 & s_2s_3 & s_4 
\end{matrix}\right)}\hspace{20pt} \text{ and }\hspace{20pt}
\overline{\mu}^{2}{=\left(\begin{matrix}
 i_1 & i_3 & i_4 & i_5  \\
s_1 & s_2,s_3 & s_4 & s_1 
\end{matrix}\right)}.$$
Therefore, the worst case for teacher $i_4$ under the misrepresenting preference $\succ'_{i_4}$ is to work at school $s_4$ while the worst case possible when teacher $i_4$ reveals her true preference $\succ_{i_4}$ is to work at school $s_1$. Since $\{s_4\}\succ_{i_4}\{s_1\}$, then $\succ'_{i_4}$ is an obvious manipulation. Therefore,  the $TRDA$ mechanism is obviously dynamically manipulable.\hfill $\Diamond$
\end{example}

An additional property must be imposed on the set of priorities to restore the condition of non-obvious dynamic manipulability in the mechanism. Specifically, suppose each priority ensures that in every period, teachers matched in the previous period are prioritized over those entering the problem in the current period. In that case, it can be shown that the $TRDA$ mechanism becomes non-obviously dynamically manipulable. It is important to note that this property requires not only that teachers matched to a school in the previous period maintain their tenured positions at that school, but also that they are given priority over newly entering teachers at all schools. This additional property for schools' priorities is formalized as follows:

\begin{definition}
    A set of priorities $\{>_s^t\}_{s\in S}$ is \textbf{lexicographic by tenure} if for all teachers $i,j\in I^t$, whenever $i\in I_E^t$ and $j\notin I_E^t$ we have $i>_s^t j$ for each school $s\in S$.
\end{definition}
  In the following theorem, we state that when schools' priorities satisfy the previously defined property, although the $TREADA$ and all dynamically stable-dominating mechanisms (in particular, the $TRDA$ mechanism) are dynamically manipulable, they are not obviously so.
 
 \begin{theorem}\label{teorema TREADA es NOM}
For any economy in which teachers' preferences satisfy substitutability, schools' priorities are lexicographic by tenure, and the mechanism considered is the $TREADA$ or any dynamically stable-dominating one, such a mechanism is not obviously dynamically manipulable.
\end{theorem}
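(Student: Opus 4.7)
The plan is to reduce the period-$t'$ analysis in the dynamic setting to a static many-to-many problem and then invoke Theorem~\ref{teorema 2 arribi}. The key structural lever is the lexicographic-by-tenure property: at every school $s$, every teacher in $I_E^{t'}$ is ranked above every new teacher in $I^{t'}\setminus I_E^{t'}$ under $>_s^{t'}$, and the derived priority $\gg_s^{t'}$ preserves this feature (it only reshuffles teachers within $I_E^{t'}$ by placing $\mu^{t'-1}(s)$ at the top). Hence, when $TRDA$ is executed in period~$t'$, no new teacher can ever displace a tenured one. I would first show that this yields a clean decomposition: the assignment of tenured teachers in period~$t'$ is determined purely among themselves (using their own preferences, priorities, and $\mu^{t'-1}$), while new teachers compete only for the residual quotas at each school. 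The same decomposition then propagates to any dynamically stable-dominating mechanism, since such a mechanism Blair-improves upon the $TRDA$ outcome without ever removing a teacher from one of her tenured positions.

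Given this decomposition, fix a teacher $i$ entering at period $t$ with true preference $\succ_i$ and suppose toward a contradiction that $\succ'_i$ is an obvious dynamic manipulation in some period $t'\geq t$. Consider first the sub-case $t=t'$: the previous matching $\mu^{t'-1}$ is exogenous to $i$'s report, and the problem $i$ actually faces is a static many-to-many problem over the new teachers and the residual quotas, under a mechanism that is either the static teacher-proposing DA (when $\varphi=TRDA$) or Pareto-dominates it (when $\varphi$ is any dynamically stable-dominating mechanism, including $TREADA$, by Theorem~\ref{teorema TREADA eficiente}). Theorem~\ref{teorema 2 arribi}(ii) then ensures that neither $W_i(O^{t'}_\varphi(\succ'_i))\succ_i W_i(O^{t'}_\varphi(\succ_i))$ nor $C_i(O^{t'}_\varphi(\succ'_i))\succ_i C_i(O^{t'}_\varphi(\succ_i))$ can hold, contradicting the obviousness of the manipulation.

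For the sub-case $t<t'$, teacher $i$ is already tenured at the start of period~$t'$, so by dynamic rationality $\mu^{t'-1}(i)$ is a lower bound on her period-$t'$ allocation under any dynamically stable-dominating mechanism. Because the tenured component of the per-period matching is isolated from new teachers' reports under lexicographic-by-tenure, I would argue period by period that in each intermediate round the marginal improvement added to $i$'s tenured bundle is the outcome of a static stable-dominating mechanism on a residual problem, to which Theorem~\ref{teorema 2 arribi} again applies. Aggregating the per-period static conclusions then rules out improvements in both the best and worst elements of $O^{t'}_\varphi(\succ'_i)$ relative to $O^{t'}_\varphi(\succ_i)$.

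The main obstacle will be the cross-period cascading in this second sub-case: $i$'s misreport at entry changes $\mu^{t-1}(i),\mu^t(i),\dots,\mu^{t'-1}(i)$, which in turn reshapes the residual quotas and derived priorities that every later period inherits. Making the static reduction fully rigorous requires showing that any Blair-improvement in $i$'s tenured bundle along the history only shrinks (in a comparable sense) the residual option sets of the other teachers, so that the worst- and best-case comparisons line up between $O^{t'}_\varphi(\succ_i)$ and $O^{t'}_\varphi(\succ'_i)$ before Theorem~\ref{teorema 2 arribi} is invoked on the corresponding static sub-problems.
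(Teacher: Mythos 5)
Your overall strategy is the same as the paper's: use lexicographicity by tenure to decompose the dynamic problem into cohort-wise static subproblems on residual capacities and then invoke Theorem~\ref{teorema 2 arribi}. However, there is a genuine gap, and it is exactly the one you flag at the end: the decomposition is asserted but never established, and for $t'>t$ the ``cross-period cascading'' is left unresolved. The paper closes this with a key lemma that your sketch only states as something you ``would first show'': under lexicographic-by-tenure priorities, the $TRDA$ outcome in \emph{every} period coincides with the outcome of a sequential procedure in which the entry cohorts $I^{t'}_1, I^{t'}_2,\dots$ run the teacher-proposing deferred acceptance algorithm one after another on the schools' unfilled capacities. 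Proving this equivalence is not a routine observation about $\gg^{t'}_s$: the paper argues by contradiction that if the two outcomes differed there would be a period, a school $s$ and teachers $i,j$ with $i>_s j$, $j\in\mu^{t-1}(s)$, $i\notin\mu^{t-1}(s)$, and then chains dynamic rationality across periods with path-independence and substitutability to exhibit a justified claim of $i$ over $s$ at the period in which $j$ was first assigned to $s$, contradicting dynamic stability of the earlier matchings. This is where the tenure structure and path-independence do the real work, and nothing in your proposal substitutes for it.

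Once that lemma is in place, the cascading worry dissolves and no period-by-period ``marginal improvement'' argument of the kind you sketch is needed: at any period $t'\geq t$, teacher $i$'s assignment is produced by a static deferred acceptance run among her own entry cohort on residual quotas that are determined entirely by earlier cohorts and hence are unaffected by $i$'s report; varying the other teachers' preferences only varies these residual static problems, and since the same family of residual problems arises under $\succ_i$ and under $\succ'_i$, the best- and worst-case comparisons reduce, configuration by configuration, to Theorem~\ref{teorema 2 arribi}(i) for $TRDA$ and to part (ii) for $TREADA$ and any dynamically stable-dominating mechanism. Your split into the cases $t=t'$ and $t<t'$, and the proposed induction on Blair-improvements ``shrinking'' residual option sets, are thus unnecessary and, as you yourself acknowledge, are not carried out; without the equivalence lemma the proposal does not yet constitute a proof.
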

\begin{proof}
  See Appendix \ref{apendice de pruebas}.   
\end{proof}

\section{Final remarks}\label{seccion de discusion}

In this paper, we consider a dynamic many-to-many school choice problem where schools have priorities over teachers, and teachers have path-independent choice functions.  We consider a setting where the set of teachers may change over time: new teachers can enter, others may leave in any given period, and those who remain matched across periods are assumed to hold tenured positions.

For this dynamic context, we propose a new concept of dynamic stability, we also demonstrate that the set of dynamically stable matchings is non-empty, and present a mechanism that, for each dynamic problem, computes a dynamically stable matching (the $TRDA$ mechanism). We show that the outcome of the $TRDA$ mechanism is constrained-efficient within the set of dynamically stable matchings and that, although a dynamically stable matching may have unjustified claims, the matching produced by the $TRDA$ mechanism minimizes such unjustified claims. As in static problems, stable mechanisms in dynamic problems leave room for efficiency improvements. We adapt the $EADA$ mechanism introduced by \cite{kesten2010school} to our dynamic many-to-many setting, resulting in the $TREADA$ mechanism. We show that, when teachers consent, the $TREADA$ mechanism produces an efficient matching, which may not be dynamically stable. 
To the best of our knowledge, this paper is the first to extend the $EADA$ mechanism introduced by \citet{kesten2010school} to a many-to-many setting where teachers have path-independent choice functions and schools have priorities.

It is important to note that, unlike in the setting studied by \citet{kesten2010school}—a many-to-one school choice market where efficiency is evaluated using the linear order of students—when assuming that teachers have path-independent choice functions (or substitutable preferences), it is crucial to rely on Blair’s partial order. This was already observed by \citet{Roth1984b}, who used Blair’s partial order to characterize the optimality of the $DA$ mechanism for the side with substitutable preferences, and further developed by \citet{Blair1988} to establish the lattice structure of stable matchings. 

Furthermore, when assuming that teachers declare substitutable preferences upon entering the market, we show that although a teacher may benefit from manipulating both the $TRDA$ and $TREADA$ mechanisms, these mechanisms are non-obviously dynamically manipulable under certain additional conditions on schools' priority structures.

A natural question that arises is whether it is possible to restrict the domain of teachers' preferences in such a way that the mechanism becomes strategy-proof. \cite{romero2021two} demonstrate that in school choice problems, where teachers' preferences are responsive (a more restrictive requirement than substitutability) and schools' priorities satisfy an ``acyclicity'' condition, the teacher-proposing deferred acceptance mechanism is strategy-proof.\footnote{A \textbf{cycle} in schools' priorities occurs when there is an alternate list of schools and teachers presenting a cyclical behavior in the following way: each school prefers the teacher to its right over the teacher to its left, and finds both acceptable. Formally, a cycle (of length $p + 1$) in $>_S$ is given by distinct teachers $i_0, i_1, \dots, i_p \in I$ and distinct schools $s_0, s_1, \dots, s_p \in S$, such that (i) $i_p >_{s_p} i_{p-1} >_{s_{p-1}} \dots >_{s_2} i_1 >_{s_1} i_0 >_{s_0} i_p$, and (ii) For each $q$, $0 \leq q \leq p$, $i_q>_{s_{q+1}} \emptyset$ and $i_q>_{s_{q}} \emptyset$, where $s_{p+1} = s_0$. A priority structure $>_S$ is \textbf{acyclic} if no cycles can be found.}

In our dynamic context, even requiring priorities to be acyclic, this is not possible when in period $t$ there are at least two teachers who were assigned to different schools in period $t-1$. To see this, consider a dynamic problem where schools' priorities are acyclic and agents $i$ and $j$ are assigned in period $t-1$ to $s$ and $s'$, respectively, in the related static problem, the priorities $\gg$ contain cycle $i \gg_s j \gg_{s'} i$.  Therefore, in our dynamic context, we cannot ensure that the $TRDA$ mechanism is strategy-proof, even in the more restrictive case where teachers' preferences are responsive.

Throughout this paper, we have assumed that teachers' preferences are time-invariant. However, it is reasonable to consider that these preferences may evolve due to previous experiences. Thus, a set of schools that in period $t$ was ranked as one of the preferred sets by a teacher may be ranked lower or even become unacceptable in a later period. The mechanism we have proposed here can be easily adapted to this situation. However, consider the case where there are only two teachers in the problem ($i$ and $j$), neither of whom holds a tenured position in any school, and three schools $s_1$, $s_2$, $s_3$. The second and third schools have one vacant position each, while the first has two vacant positions. All schools prioritize teacher $i$ over teacher $j$. Assume that teachers declare the following preferences:
\begin{center}
   \begin{tabular}{l}
  $\succ^t_i: s_1s_2, s_1s_3, s_1, s_2, s_3$\\
   $\succ^t_j: s_1s_3, s_1, s_3, s_2$
\end{tabular} 
\end{center}
The outcome of the $TRDA$ mechanism in period $t$ would then be:
$$\mu_{TR} = \begin{pmatrix} i & j \\ s_1s_2 & s_1s_3 \end{pmatrix}.$$
Now assume that in period $t+1$, teacher $i$ declares her (true) preference $\succ^{t+1}_i: s_1s_3, s_1s_2, s_1, s_2, s_3$.
Meanwhile, teacher $j$ maintains the same preference list over time. Under these conditions, the only dynamically rational matching is $\mu_{TR}.$ However, if agent $i$ misrepresent her preference in period $t$ by declaring $\succ^{t+1}_i$, she could manipulate the outcome because then she is assigned to schools $s_1$ and $s_3$ in both periods, resulting in a better outcome for teacher $i$ in period $t+1$. Moreover, the dynamic manipulation is obvious since $s_1s_3 \succ_i^{t+1} s_1s_2$. Therefore, the mechanism is obviously dynamically manipulable. Consequently, if teachers' preferences are allowed to change over time, the $TRDA$ mechanism can have obvious dynamic manipulations.

\appendix
\section{Appendix}\label{apendice de pruebas}

In this Appendix, we present the proofs of all our results. We first present Lemma \ref{estable en dinamico sii estable en estatico} that shows the relation between a dynamically stable matching for the dynamic problem, and a stable matching of a related static problem derived from priority $\gg^t_S.$ 

\begin{lemma}\label{estable en dinamico sii estable en estatico}
Let $\mathcal{P}^t$ be a dynamic problem, let $\mu^t\in \mathcal{M}^t$ be a matching, and let $\gg^t_S$ be the derived priorities. Then, 
 $\mu^t$ is dynamically stable at $\mathcal{P}^t$ if and only if it is stable at the related problem  $\mathcal{P}$.
\end{lemma}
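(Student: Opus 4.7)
Both notions share individual rationality and non-wastefulness (they are phrased identically), so those transfer between the static and dynamic problems for free; what needs work is translating between ``no $\gg_s^t$-claim'' on the static side and ``dynamically rational, no justified $>^t_s$-claim'' on the dynamic side. The plan is to exploit the fact that the derived priority $\gg_s^t$ agrees with $>_s^t$ except that the block $\mu^{t-1}(s)$ is lifted to the top, together with path-independence of $C_i$, and to handle each direction by case analysis on whether the claimant lies in $\mu^{t-1}(s)$.

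For the forward implication, I would assume $\mu^t$ is dynamically stable and suppose, toward a contradiction, that some teacher $i$ claims $s$ with respect to $\gg_s^t$, witnessed by $j\in\mu^t(s)$ with $i\gg_s^t j$, and split on whether $i\in\mu^{t-1}(s)$. If $i\notin\mu^{t-1}(s)$, item (i) of Definition \ref{defino violo prioridades} forces $j\notin\mu^{t-1}(s)$ as well (otherwise $j\gg_s^t i$), and item (iii) then yields $i>_s^t j$; this produces a justified claim in $>_s^t$, contradicting dynamic stability. If instead $i\in\mu^{t-1}(s)$, so $s\in\mu^{t-1}(i)$, I would exploit consistency: dynamic rationality gives $C_i(\mu^t(i)\cup\mu^{t-1}(i))=\mu^t(i)$, and from the sandwich $\mu^t(i)\subseteq\mu^t(i)\cup\{s\}\subseteq\mu^t(i)\cup\mu^{t-1}(i)$ consistency yields $C_i(\mu^t(i)\cup\{s\})=\mu^t(i)$, ruling out $s\in C_i(\mu^t(i)\cup\{s\})$.

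For the backward implication, assume $\mu^t$ is stable at $\mathcal{P}$. Elimination of justified claims is the easier half: any justified claim $(i,s)$ in $>_s^t$ has witness $j\notin\mu^{t-1}(s)$, and Definition \ref{defino violo prioridades} then forces $i\gg_s^t j$ regardless of whether $i\in\mu^{t-1}(s)$, contradicting stability at $\mathcal{P}$. For dynamic rationality, suppose $C_i(\mu^t(i)\cup\mu^{t-1}(i))\neq\mu^t(i)$ for some $i\in I^t_E$. Applying consistency to the sandwich $C_i(\mu^t(i)\cup\mu^{t-1}(i))\subseteq\mu^t(i)\subseteq\mu^t(i)\cup\mu^{t-1}(i)$, together with $C_i(\mu^t(i))=\mu^t(i)$ from individual rationality, rules out $C_i(\mu^t(i)\cup\mu^{t-1}(i))\subseteq\mu^t(i)$, so there is $s\in\mu^{t-1}(i)\setminus\mu^t(i)$ with $s\in C_i(\mu^t(i)\cup\mu^{t-1}(i))$; iterated substitutability then gives $s\in C_i(\mu^t(i)\cup\{s\})$. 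Non-wastefulness forces $|\mu^t(s)|=q_s$; and since $i\in\mu^{t-1}(s)$ is $\gg_s^t$-ranked above every teacher outside $\mu^{t-1}(s)$, absence of $\gg_s^t$-claims forces $\mu^t(s)\subseteq\mu^{t-1}(s)$, hence $\mu^t(s)=\mu^{t-1}(s)\ni i$ by capacity, contradicting $s\notin\mu^t(i)$.

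The main technical obstacle is that substitutability alone is too weak to transfer the condition $s\in C_i(\mu^t(i)\cup\{s\})$ to the larger set $\mu^t(i)\cup\mu^{t-1}(i)$, and vice versa. The correct lever in both directions turns out to be consistency, applied to a sandwich of the form $\mu^t(i)\subseteq\mu^t(i)\cup\{s\}\subseteq\mu^t(i)\cup\mu^{t-1}(i)$ (or the symmetric sandwich $C_i(\mu^t(i)\cup\mu^{t-1}(i))\subseteq\mu^t(i)\subseteq\mu^t(i)\cup\mu^{t-1}(i)$), which is precisely where path-independence rather than mere substitutability is used.
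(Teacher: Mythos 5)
Your proof is correct and follows essentially the same route as the paper's: transfer individual rationality and non-wastefulness directly, then translate claims between $>_S^t$ and $\gg_S^t$ by case analysis on membership in $\mu^{t-1}(s)$, using dynamic rationality together with substitutability and consistency of $C_i$. Your forward direction merely streamlines the paper's subcases (its Cases 1, 2.1 and 2.3 collapse into your single consistency argument for $i\in\mu^{t-1}(s)$), and your backward direction matches the paper's argument almost verbatim.
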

\begin{proof}
$(\Longrightarrow)$ Let $\mu^t$ be a dynamically stable matching at $\mathcal{P}^t$. Thus, $\mu^t$ is non-wasteful and individually rational. So, to prove that $\mu^t$ is stable, we only need to show that $\mu^t$ eliminates all claims with respect to $\gg^t_S$. Assume otherwise, then there are $i\in I^t$ and $s\in S$ such that $i$ has a claim over $s$ with respect to $\gg_s^t$ at matching $\mu^t$. Thus, by definition, there is an other teacher $j\in \mu^t(s)$ such that $i\gg_{s}^t j$, $s\notin\mu^t (i)$, $s\in C_{i}(\mu^t(i)\cup \left\lbrace s \right \rbrace)$. We have two cases to consider:

\noindent \textbf{Case 1}: $\boldsymbol{j>_s^ti}$. Since $i\gg_s^tj$, we have that $i\in\mu^{t-1}(s)$ and $j\notin\mu^{t-1}(s)$, then $s\notin\mu^t(i)$ and $s\in\mu^{t-1}(i)$. Since $s\notin\mu^t(i)$, and the fact that $\mu^t$ is dynamically rational, then $s\notin C_i(\mu^t(i))$. Thus, by substitutability, $s\notin C_i(\mu^t(i)\cup\left\lbrace s\right\rbrace)$, which is a contradiction.

\noindent \textbf{Case 2}: $\boldsymbol{i>_s^tj}$. Since $i\gg_s^tj$, we have three sub-cases to consider:
        
             \textbf{Case 2.1: $\boldsymbol{i\in\mu^{t-1}(s)$ and $j\notin\mu^{t-1}(s)}$}. Then, there is a justified claim of $i$ over school $s$ with respect to priority $>^t_s$ at matching $\mu^t$.  
          
            \textbf{Case 2.2: $\boldsymbol{i,j\notin\mu^{t-1}(s)}$}. Then, there is a justified claim of $i$ over school $s$ with respect to priority $>^t_s$ at matching $\mu^t$.
            
             \textbf{Case 2.3: $\boldsymbol{i,j\in\mu ^{t-1}(s)}$}. Since $\mu^t$ is non-wasteful, $|\mu^t (s)|=q_s$ . Considering that $i\in\mu^{t-1}(s)$ and $i\notin \mu^t(s)$, there is $k\in\mu^t(s)\setminus\mu ^{t-1}(s)$ such that $i\gg^t_sk$. If $i>_s^tk$, by case 2.1, there is a justified claim of $i$ over school $s$ with respect to priority $>^t_s$ at matching $\mu^t$. If $k>_s^ti$, then by case 1, we have a contradiction. 

 Therefore, $\mu^t$ is stable at the related problem $\mathcal{P}$.
 
\noindent $(\Longleftarrow)$  Suppose that $\mu^t$ is stable to  $\mathcal{P}$ but is not dynamically stable. Then, we have two cases to consider:  

\noindent \textbf{Case 1: $\boldsymbol{\mu^t}$ is not dynamically rational}.  Note that $\mu^t$ is individually rational since it is stable. So, if $\mu^t$ is not dynamically rational, there is $i\in I_E^t$ such that $\mu^t(i)\nsucceq^B _i\mu ^{t-1}(i)$. Thus, $C_i(\mu^t(i)\cup \mu^{t-1}(i))\neq\mu^t(i)$. First, assume that there is $s\in C_i(\mu^t(i)\cup\mu^{t-1}(i))$ such that $s\notin\mu^{t}(i)$. Then, $s\in\mu^{t-1}(i).$ Hence, by substitutability,  $s\in C_i(\mu^t(i)\cup\{s \})$.  Since $\mu^t$ is non-wasteful ($|\mu^t (s)|=q_s$), we have that  $s\in\mu^{t-1}(i)$ and $s\notin \mu^t(i)$, then there is $j\in\mu^t(s)$ such that $j\notin \mu ^{t-1}(s)$. Hence, $i\gg _s^tj$, and this implies that $i$ has a claim over  $s$ with respect to priority $\gg^t_s$ at matching $\mu^t$. A contradiction since $\mu^t$ is, by hypothesis, stable at the related problem $\mathcal{P}.$ Second, assume that there is $s\in \mu^{t}(i)$ such that $s\notin C_i(\mu^t(i)\cup\mu^{t-1}(i))$. Then, there is $\overline{s}$ such that $\overline{s}\in \mu^{t-1}(i),$ $\overline{s}\notin \mu^t(i)$, and $\overline{s}\in C_i(\mu^t(i)\cup\{\overline{s}\}).$ Since $\mu^t$ is non-wasteful, there is $j\in \mu^t(\overline{s})$ such that $j\notin \mu^{t-1}(\overline{s}).$ Hence, $i\gg _{\overline{s}}^tj.$ This implies that $i$ has a claim over  $\overline{s}$ with respect to priority $\gg^t_{\overline{s}}$ at matching $\mu^t$. A contradiction since $\mu^t$ is, by hypothesis, stable at the related problem $\mathcal{P}.$

\noindent \textbf{Case 2: There is a justified claim with respect to priority $\boldsymbol{>^t_s$ at matching $\mu^t}$}. Assume that there are $i\in I^t$ and $s\in S$ such that $i$ have a justified claim over $s$ with respect to priority $>^t_s$ at matching $\mu^t$. Then, there is $j\in I^t$ such that $j\in\mu^t(s)$, $i>^t_s j$, $j\notin\mu^{t-1}(s)$, and $s\in C_i(\mu^t(i)\cup \{s\}$. This implies that $i\gg_s^tj$. Then, $i$ has a claim over  $s$ with respect to priority $\gg^t_s$ at matching $\mu^t$, contradicting that $\mu^t$ is, by hypothesis, stable at the related problem $\mathcal{P}.$ 

Therefore, $\mu^t$ is dynamically stable at problem $\mathcal{P}^t$.
\end{proof}

\noindent\begin{proof}[Proof of Theorem \ref{conjunto de estables dinamico no vacio}] 
 Let $\mathcal{P}^t$ be a dynamic problem. Consider the outcome of the teacher-proposing deferred acceptance mechanism applied to the related problem $\mathcal{P}$. By  \cite{GaleShapley1962}, such outcome is a stable matching for $\mathcal{P}$ and by Lemma \ref{estable en dinamico sii estable en estatico} it is a dynamically stable matching for $\mathcal{P}^t$. 
\end{proof}


\noindent\begin{proof}[Proof of Theorem \ref{TRDA minumica reclamos injustificados}]
Given a dynamic problem $\mathcal{P}^t$, by Theorem \ref{conjunto de estables dinamico no vacio}, we know that the matching $\mu_{TR}$ is dynamically stable. 
If $\mathcal{U}(\mu_{TR})=\emptyset$ the proof is complete.
Otherwise, if $\mu_{TR}$ does not minimize unjustified claims, then there is a dynamically stable matching $\mu^t$ such that $\mathcal{U}(\mu^t)\subsetneq \mathcal{U}(\mu_{TR})$. 

Since matching $\mu_{TR}$ is constrained-efficient (for teachers), $\mu_{TR}(i)\succeq^B_i \mu^t(i)$  for each $i\in I^t$ and there is a teacher $h$ such that $\mu_{TR}(h)\succ^B_h \mu^t(h)$.

Suppose that there is $i\in I^t$ and $s\in S$, such that $(i,s)\in\mathcal{U}(\mu_{TR})$ and $(i,s)\notin \mathcal{U}(\mu^t)$. Then, $s\notin \mu_{TR}(i)$, $s\in C_i(\mu_{TR}(i)\cup \{s\})$ and there is a teacher $j\in\mu_{TR}(s)$ such that $j\in\mu^{t-1}(i)$ and $i>^t_sj$. Note that $(i,s)\notin \mathcal{U} (\mu^t)$ can happen for three reasons: 

\noindent \textbf{Case 1: there is $\boldsymbol{j\in\mu^t(s)$ such that $i>^t_sj $ and $j\in\mu^{t-1}(s)}$}. Then, if teacher $i$ has a justified claim over $s$ at $\mu^t$, then $\mu^t$ is not a dynamically stable matching, which contradicts how we had taken $\mu^t$.

\noindent \textbf{Case 2: $\boldsymbol{s\notin C_i(\mu^t(i)\cup\{s\})}$}. By substitutability we know that $s\notin C_i(\mu_{TR}(i) \cup \mu^t \cup \{s\})$. Note that by the constrained-efficiency of $\mu_{TR}$ and the path-independence property of $C_i$ we have that:
\begin{center}
     $C_i(\mu_{TR}(i) \cup \mu^t(i) \cup \{s\})=C_i(C_i(\mu_{TR}(i) \cup \mu^t(i))\cup \{s\})=C_i(\mu_{TR}(i) \cup\{s\})$.
\end{center}
Then, $s\notin C_i(\mu_{TR}(i) \cup \{s\})$ which is a contradiction.

\noindent \textbf{Case 3: $\boldsymbol{s\in \mu^t(i)}$}. As $(i,s)\in \mathcal{U}(\mu_{TR})$ we have that $s\in C_i(\mu_{TR}(i)\cup\{s\})$. Given that $\mu_{TR}$ is dynamically rational and  constrained-efficient we have that
$$
    C_i(\mu_{TR}(i)\cup\{s\})=C_i(C_i(\mu_{TR}(i)\cup \mu^t(i))\cup\{s\}).$$
By the path-independence property of $C_i$, 
$$C_i(C_i(\mu_{TR}(i)\cup \mu^t(i))\cup\{s\})=C_i(\mu_{TR}(i)\cup \mu^t(i) \cup\{s\}).$$
Since $s\in \mu^t(i)$ and the constrained-efficiency of $\mu_{TR},$ we have
$$C_i(\mu_{TR}(i)\cup \mu^t(i) \cup\{s\})=C_i(\mu_{TR}(i)\cup \mu^t(i))=C_i(\mu_{TR}(i)),$$
 implying that $C_i(\mu_{TR}(i)\cup\{s\})=C_i(\mu_{TR}(i)).$
 Then $s\in C_i(\mu_{TR}(i))$ which contradicts previous statements. 

 So, we proved that $\mu_{TR}$ minimizes unjustified claims.
\end{proof}


\noindent \begin{proof}[Proof of Lemma \ref{lema s-truncacion de C_i}]
    Let $(i,s)$ be a teacher-school pair, and let $S',S''\subseteq S$. Let $C_i$ be a path-independent choice function. By the Definition \ref{defino s-truncacion} and the path-independence of $C_i$,   
    \begin{equation}\label{ecu1 lema s-truc}
    \begin{split}
        &C_i^s(S'\cup S'')=C_i\left((S'\cup S'')\setminus \{s\}\right)=C_i\left((S'\setminus \{s\})\cup (S''\setminus \{s\})\right)=\\
      &=C_i\left( C_i(S'\setminus \{s\})\cup (S''\setminus \{s\})\right)=C_i\left( C^s_i(S')\cup (S''\setminus \{s\})\right).
    \end{split}
          \end{equation}
   By definition of choice function $C^s_i$ and $C_i$, $C^s_i(S')=C_i(S'\setminus\{s\})\subseteq S'\setminus\{s\}.$
   This last equation together with Definition \ref{defino s-truncacion}, implies that 
   \begin{equation} \label{ecu2 lema s-truc}
   \begin{split}
       &C_i\left( C^s_i(S')\cup (S''\setminus \{s\})\right)=C_i\left( (C^s_i(S')\setminus \{s\} )\cup (S''\setminus \{s\})\right)=\\
       &=C_i\left( (C^s_i(S') \cup S'')\setminus \{s\}\right)=C^s_i\left( C^s_i(S') \cup S''\right).
   \end{split}       
   \end{equation}
   By \eqref{ecu1 lema s-truc} and \eqref{ecu2 lema s-truc} we have that $C_i^s(S'\cup S'')=C^s_i\left( C^s_i(S') \cup S''\right)$ and, therefore,  choice function $C^s_i$ is path-independent.
 \end{proof}

The following two lemmata are the key ingredients for proving that the $TREADA$ mechanism is (non-constrained) efficient if all teachers consent. The proofs of these lemmata share the same spirit as those in \cite{kesten2010school}, but have been adapted to our context using techniques appropriate for path-independent choice functions.
\begin{lemma}\label{lema 2 del eada}
For the domain of dynamic problems with path-independent choice functions, the matching produced at the end of round~$r$ ($r \geq 1$) by the $TREADA$ mechanism assigns each teacher to a set of schools that is at least as good for her, according to Blair’s partial order, as the set she was assigned at the end of round~$r-1$.
\end{lemma}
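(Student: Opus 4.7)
The plan is to reduce the claim to the Blair-constrained-efficiency of the $TRDA$ mechanism (Theorem \ref{teorema pareto superioridad del mecanismo}), by showing that $\mu^{r-1}$ itself remains dynamically stable under the profile of choice functions used in round $r$. Fix $r\geq 1$; let $C^{r-1}_{I^t}$ and $C^r_{I^t}$ denote the profiles used in rounds $r-1$ and $r$ of the $TREADA$ mechanism, and let $\mu^{r-1}$ and $\mu^r$ denote the corresponding $TRDA$ outputs. By construction, for every consenting interrupter pair $(i,s)$ identified at the relevant step of round $r-1$, $C^r_i$ is the $s$-truncation of $C^{r-1}_i$, and all other teachers' choice functions are unchanged; each $C^r_i$ is path-independent by Lemma \ref{lema s-truncacion de C_i}.

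The key observation is that $s\notin \mu^{r-1}(i)$ for every such interrupter $(i,s)$: by definition, an interrupter for $s$ is eventually rejected by $s$ during the $TRDA$ run of round $r-1$. Combined with the fact that $C^r_i$ and $C^{r-1}_i$ coincide on subsets of $S$ not containing $s$, and using the consistency property of path-independent choice functions, I would verify that the three defining conditions of dynamic stability carry over from $C^{r-1}_{I^t}$ to $C^r_{I^t}$ at $\mu^{r-1}$. In particular, any candidate justified claim at $\mu^{r-1}$ under $C^r_{I^t}$ either involves an interrupter $i$ claiming her own truncated school $s$ (ruled out because $s\notin C^r_i(\mu^{r-1}(i)\cup\{s\})$ by construction of the truncation) or reduces to a claim under $C^{r-1}_{I^t}$, contradicting the dynamic stability of $\mu^{r-1}$ established in round $r-1$.

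Having established that $\mu^{r-1}$ is dynamically stable under $C^r_{I^t}$, Theorem \ref{teorema pareto superioridad del mecanismo} applied to the problem with profile $C^r_{I^t}$ yields $\mu^r(i)\succeq^B_i\mu^{r-1}(i)$ for every $i\in I^t$, where the Blair partial order here is the one induced by $C^r_i$. To translate this into the Blair order induced by the original (true) choice function $C^0_i$, I would observe that both $\mu^r(i)$ and $\mu^{r-1}(i)$ are disjoint from the cumulative set of schools that have been truncated from $C^0_i$ through round $r$, and then use the definition of the truncation operation to conclude $C^0_i(\mu^r(i)\cup\mu^{r-1}(i))=C^r_i(\mu^r(i)\cup\mu^{r-1}(i))=\mu^r(i)$, giving the claim in the original Blair order.

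The main obstacle is the careful verification that $\mu^{r-1}$ remains dynamically stable under the new profile $C^r_{I^t}$, in particular the preservation of dynamic rationality and the elimination of justified claims. This relies on a case split based on whether the claimant is an interrupter or not, and on whether the claimed school is the interrupter's truncated school, together with a systematic use of consistency of path-independent choice functions to reduce each case either to the previous round's stability or to the defining exclusion of the truncation. Once this technical step is completed, the rest is a direct application of Theorem \ref{teorema pareto superioridad del mecanismo} combined with the straightforward translation to the original Blair order.
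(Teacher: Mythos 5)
Your proposal is correct, but it follows a genuinely different route from the paper. The paper proves the lemma by contradiction directly on the execution of the algorithm: supposing some teacher ends round $r$ with a set that is not Blair-weakly better than her round-$(r-1)$ set, it constructs a chain of teachers and schools (each displaced teacher triggering the next), identifies the first teacher who applies to a Blair-worse set, and reaches a contradiction by distinguishing whether that teacher is an interrupter in round $r-1$. You instead modularize: you show that the round-$(r-1)$ outcome $\mu_{r-1}$ remains dynamically stable in the round-$r$ problem with the truncated profile $C^r_{I^t}$ --- the essential facts being that every newly truncated school rejected its interrupter in round $r-1$ (so it lies outside $\mu_{r-1}(i)$), that $C^r_i$ coincides with $C^{r-1}_i$ on sets avoiding truncated schools, and consistency of path-independent choice functions --- and then you invoke the constrained-efficiency of the $TRDA$ mechanism (Theorem \ref{teorema pareto superioridad del mecanismo}) on the truncated problem, which is legitimate since Lemma \ref{lema s-truncacion de C_i} keeps the profile path-independent. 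Your final translation step, $C_i(\mu_r(i)\cup\mu_{r-1}(i))=C^r_i(\mu_r(i)\cup\mu_{r-1}(i))=\mu_r(i)$ because both assignments avoid the cumulative truncation set, is valid and in fact makes explicit a point the paper glosses over, namely in which Blair order the round-to-round comparison is made once choice functions have been modified. What your approach buys is a shorter argument that reuses Theorems \ref{conjunto de estables dinamico no vacio} and \ref{teorema pareto superioridad del mecanismo} instead of redoing a rejection-chain analysis; what the paper's approach buys is self-containment at the level of the algorithm's steps, without appealing to the optimality theorem. The only part you should write out in full is the routine verification that dynamic rationality (with respect to $\mu^{t-1}$) and non-wastefulness also carry over to the truncated problem; the ingredients you list suffice, since consistency handles the former and the fact that $C^r_i$ never selects a truncated school handles the latter.
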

\begin{proof}
Assume there is a problem, a round $r$ ($r \geq 1$) of the $TREADA$ mechanism, and a teacher $i_1$ such that the set of schools assigned to $i_1$ in round $r$ is not Blair-preferred to the set assigned in round $r-1$. That is, $\mu_{r}(i_1) \not\succ^B_{i_1} \mu_{r-1}(i_1)$, where $\mu_{r}(i_1)$ denotes the set of schools assigned to teacher $i_1$ by the matching at the end of round $r$, and $\mu_{r-1}(i_1)$ denotes the set assigned at the end of round $r-1$.

\noindent\textbf{Claim:} \textbf{$\boldsymbol{\mu_{r}(i_1) \not\succ^B_{i_1} \mu_{r-1}(i_1)}$ implies that there is a school $\boldsymbol{s_1 \in \mu_{r-1}(i_1)\setminus \mu_{r}(i_1)}$ and a teacher $\boldsymbol{i_2\in \mu_{r}(s_1)\setminus \mu_{r-1}(s_1)}$}. 
Assume not, then $\mu_{r-1}(i_1)\subsetneq \mu_{r}(i_1).$ Since $\mu_{r}(i_1) \not\succ^B_{i_1} \mu_{r-1}(i_1)$, we have that  $\mu_{r}(i_1)\neq C_{i_1}(\mu_{r}(i_1) \cup \mu_{r-1}(i_1))=C_{i_1}(\mu_{r}(i_1))$, contradicting the individual rationality of $\mu_{r}$. Thus, there is a school $s_1 \in \mu_{r-1}(i_1)$ such that $s_1 \notin \mu_r(i_1)$, that is, $i_1$ was rejected by school $s_1$ at round $r$ and another teacher $i_2 \in I \setminus \{i_1\}$ who had not previously applied to school $s_1$ is temporarily accepted at round $r$ and proving the claim. 

Since $i_2$ is temporarily accepted in round $r$ after being rejected by a school to which she was assigned under $\mu_{r-1}$, it follows that $\mu_{r-1}(i_2) \succ^B_{i_2} \mu_r(i_2)$. By similar reasoning to the claim, there is a school $s_2 \in \mu_{r-1}(i_2)$ such that $s_2 \notin \mu_r(i_2)$ and there is a teacher $i_3 \in I \setminus \{i_1, i_2\}$ such that $s_2 \notin \mu_{r-1}(i_3)$ and $s_2 \in \mu_r(i_3)$. Continuing recursively in this way, there is a school $s_{k-1} \in \mu_{r-1}(i_{k-1})$ such that $s_{k-1} \notin \mu_r(i_{k-1})$ and there is a teacher $i_k \in I \setminus \{i_1, i_2, \dots, i_{k-1}\}$ such that $s_{k-1} \notin \mu_{r-1}(i_k)$ and $s_{k-1} \in \mu_r(i_k)$. Therefore, by finiteness of the set $I$, there is a teacher $i_k \in I \setminus \{i_1, \dots, i_{k-1}\}$ who is the first teacher to apply to a set of schools less Blair-preferred than the set obtained in round $r-1$, i.e., $\mu_{r-1}(i_k) \succ^B_{i_k} \mu_r(i_k)$. Now, we analyze two possibilities:

\noindent\textbf{Case 1: $\boldsymbol{i_k$ is not an interrupter in round $r-1}$}. Then, the choice function of teacher $i_k$ remain the same in both rounds $r$ and $r-1$, and by similar reasoning to the claim, there are a school $s \in \mu_{r-1}(i_k) \setminus \mu_r(i_k)$ and a teacher $i \in I \setminus \{i_1, \dots, i_k\}$ such that $s \notin \mu_{r-1}(i)$ and $s \in \mu_r(i)$. This contradicts that $i_k$ is the first teacher to apply to a less Blair-preferred set of schools than the one obtained in round $r-1$.

\noindent\textbf{Case 2: $\boldsymbol{i_k$ is an interrupter in round $r-1}$}. Then, teacher $i_k$, instead of applying to a set of schools that includes a school for which she is an interrupter, applies to the next best set that does not include the schools for which $i_k$ is an interrupter according to Blair's ordering, say $S^\star$. Then, $S^\star \succeq^B_{i_k} \mu_{r-1}(i_k) \succ^B_{i_k} \mu_r(i_k).$
Therefore, by similar reasoning to the claim, there is a school $s_k \in \mu_{r-1}(i_k) \setminus \mu_r(i_k)$ and a teacher $i \in I \setminus \{i_1, \dots, i_k\}$ such that $i \in \mu_r(s_k)$ and $i \notin \mu_{r-1}(s_k).$ This contradicts that $i_k$ is the first teacher to apply to a less Blair-preferred set of schools than the one obtained in round $r-1$.

Thus, by Cases 1 and 2, we have that the matching obtained at the end of round $r$ of the $TREADA$ mechanism, with $r \geq 1$, places each teacher in a set of schools that is at least as good for them as the set of schools they were placed in at the end of round $r-1$, according to Blair's ordering.
\end{proof}

\begin{lemma}\label{lema 3 del eada}
For the domain of dynamic problems with path-independent choice functions,  the outcome of the $TREADA$ mechanism is efficient when all teachers consent.
\end{lemma}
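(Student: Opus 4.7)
The plan is to argue by contradiction. Let $\mu^\star$ denote the outcome of the $TREADA$ mechanism at the terminal round $r^\star$, produced by the $TRDA$ mechanism applied to the final profile of (iteratively truncated) choice functions $(C_i^\star)_{i \in I^t}$. Suppose, toward contradiction, that $\mu^\star$ is not efficient: there exists a matching $\nu$ with $\nu(i) \succeq_i^B \mu^\star(i)$ for every $i \in I^t$ and $\nu(h) \succ_h^B \mu^\star(h)$ for some $h \in I^t$. By Lemma \ref{lema s-truncacion de C_i}, each $C_i^\star$ is path-independent, and by Theorem \ref{conjunto de estables dinamico no vacio} together with Theorem \ref{teorema pareto superioridad del mecanismo}, $\mu^\star$ is the Blair-optimal dynamically stable matching of the related static problem with priorities $\gg_S^t$ and choice functions $(C_i^\star)_{i \in I^t}$. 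So $\nu$ cannot be dynamically stable with respect to $(C_i^\star)$; I will analyze the two ways this can fail and derive a contradiction.

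First, suppose $\nu$ fails to be dynamically rational under $(C_i^\star)$: some $i$ is matched under $\nu$ to a school $s$ with $s \notin C_i^\star(\nu(i))$, which by Definition \ref{defino s-truncacion} means $s$ was removed from $i$'s choice function during some round $r \leq r^\star$ where $i$ was an identified consenting interrupter for $s$. By Lemma \ref{lema 2 del eada}, the round outputs $\mu_0, \mu_1, \dots, \mu_{r^\star} = \mu^\star$ satisfy $\mu_r(i) \succeq_i^B \mu_{r-1}(i)$ for every teacher $i$ and every round $r \geq 1$; hence any school that teacher $i$ ever held tentatively after being rejected as an interrupter can be re-assigned only to some teacher $j$ of equal or lower priority in $\gg_s^t$. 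A standard rejection-chain argument (as in the proof of Lemma \ref{lema 2 del eada}) then shows that for $\nu$ to place $i$ at $s$, we must simultaneously shift assignments along a chain $i = i_1, s_1, i_2, s_2, \ldots$ with each $i_k$ receiving under $\nu$ a school from which she had been rejected in some round of $TREADA$; iterating forces some teacher at the end of the chain to receive a strictly Blair-worse set than under $\mu^\star$, contradicting the assumption that $\nu$ weakly Blair-dominates $\mu^\star$ for everyone.

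Second, suppose $\nu$ has a justified claim under $(C_i^\star)$: some teacher $i$ and school $s$ satisfy $s \in C_i^\star(\nu(i) \cup \{s\})$, $|\nu(s)| < q_s$ or some $j \in \nu(s) \setminus \mu^{t-1}(s)$ has $i >_s^t j$, and $s \notin \nu(i)$. Because $\nu(i) \succeq_i^B \mu^\star(i)$ and $C_i^\star$ is path-independent, I can apply consistency and substitutability to transfer this claim to $\mu^\star$: using $C_i^\star(\nu(i) \cup \mu^\star(i)) = \mu^\star(i)$ (by Blair-optimality and path-independence of $C_i^\star$), we get $s \in C_i^\star(\mu^\star(i) \cup \{s\})$. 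But then $(i,s)$ would also be a justified claim at $\mu^\star$ with respect to $(C_i^\star)$, contradicting the dynamic stability of $\mu^\star$ for the final truncated problem.

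The main obstacle is the second, path-independence-based step: transferring the blocking structure between $\nu$ and $\mu^\star$ is straightforward when preferences are linear rankings (as in \citealp{kesten2010school}), but with choice functions we must explicitly invoke both axioms that characterize path-independence. Specifically, substitutability is required to show that removing schools in $\mu^\star(i) \setminus \nu(i)$ preserves the presence of $s$ in the chosen set, and consistency is needed to equate $C_i^\star$ applied to the union with $C_i^\star$ applied to $\mu^\star(i) \cup \{s\}$. Handling these manipulations carefully in the rejection-chain of the first step — where several teachers' choice functions may have been independently truncated over several rounds — is where all the delicacy lies, but Lemma \ref{lema 2 del eada} provides exactly the monotone structure needed to close the induction.
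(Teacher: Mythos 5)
There is a genuine gap, and it sits exactly where the real work of this lemma lies. Your strategy is to deduce, from the Blair-optimality of $\mu^\star$ in the final truncated problem, that the dominating matching $\nu$ cannot be stable for $(C^\star_i)_{i\in I^t}$, and then to turn that instability into a contradiction. But instability of $\nu$ is not, by itself, contradictory: a matching that Blair-dominates the TREADA outcome has no reason to be stable (the whole point of consenting is that efficiency gains come at the price of claims). The contradiction must instead be extracted from the termination condition of the algorithm --- that no interrupter pairs remain in the last round. That is what the paper's proof does: it shows by induction on rounds that under any Blair-dominating $\nu$ no interrupter is assigned to a school for which she is an interrupter, and then, from a cycle of teachers each wanting to add a school held by the next one under the final-round matching, identifies the last teacher in that cycle to apply in the final DA run and exhibits a fresh interrupter pair, contradicting that the last round has none. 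Your Case 1 gestures at this (``a standard rejection-chain argument \dots as in Lemma \ref{lema 2 del eada}''), but Lemma \ref{lema 2 del eada} only gives round-to-round monotonicity of the TREADA outputs; it does not deliver the chain/cycle argument for an arbitrary dominating matching $\nu$, so the entire substance of the lemma is asserted rather than proved. (Also, your parenthetical claim that a school which rejected an interrupter can subsequently hold only lower-priority teachers is backwards for deferred acceptance.)

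Case 2 is broken at the transfer step. You invoke $C^\star_i(\nu(i)\cup\mu^\star(i))=\mu^\star(i)$, i.e.\ $\mu^\star(i)\succeq^B_i\nu(i)$ with respect to the truncated choice function, ``by Blair-optimality''; but Blair-optimality of $\mu^\star$ is only over stable matchings of the truncated problem, and in this case $\nu$ is precisely assumed not to be stable, while the hypothesis of the lemma gives the opposite relation $\nu(i)\succeq^B_i\mu^\star(i)$ --- and for the original $C_i$, not $C^\star_i$; you never bridge these two orders, even though efficiency is defined with respect to the teachers' true choice functions. Moreover, even if you obtained $s\in C^\star_i(\mu^\star(i)\cup\{s\})$, a claim does not transfer from $\nu$ to $\mu^\star$: the school-side condition (an empty seat, or a lower-priority teacher occupying $s$) refers to $\nu(s)$ and says nothing about $\mu^\star(s)$, so no violation of the dynamic stability of $\mu^\star$ follows. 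In short, the decomposition ``either $\nu$ is dynamically irrational or it admits a claim under $(C^\star_i)$'' cannot be closed into a contradiction along the lines you propose; you need the interrupter-based induction over rounds, together with the final-round cycle argument, that the paper's proof carries out.
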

\begin{proof}
Assume, by contradiction, that there is a problem for which the matching selected by the $TREADA$ mechanism is not efficient. Call this matching $\mu$. Then, there is another matching $\nu$ that Blair-dominates $\mu$. Assume the algorithm ends in $R \geq 1$ rounds. For $r \in \{1, 2, \dots, R\}$, let $\mu_r$ be the matching obtained at the end of round $r$ of the $TREADA$ mechanism. By Lemma \ref{lema 2 del eada}, $\nu(i) \succeq^B_i \mu(i) \succeq^B_i \mu_r(i)$ for all $i \in I$, and there is a teacher $j \in I$ such that $\nu(j) \succ^B_j \mu(j) \succeq^B_j \mu_r(j)$.  
First, we will see that for  $\nu$, there is no interrupter at any round.  Assume that under $\nu$, there is an interrupter $i_1$ from round 1 who is assigned to $s_1$, for which she is an interrupter. That is, $i_1 \in \nu(s_1)\setminus \mu_1(s_1)$. Furthermore, $ |\mu_1(s_1)| = q_{s_1}$; otherwise, teacher $i_1$ should not have been rejected by $s_1$ at $\mu_1$. Then, there is a teacher $i_2$ such that $i_2 \in \mu_1(s_1)\setminus \nu(s_1)$, and $\nu(i_2) \succ^B_{i_2} \mu_1(i_2).$  Since $s_1 \in \mu_1(i_2) \setminus \nu(i_2)$, there is a school $s_2 \in \nu(i_2) \setminus \mu_1(i_2)$, such that   $ |\mu_1(s_2)| = q_{s_2}$. Otherwise, teacher $i_2$ should not have been rejected by $s_2$ at $\mu_1$.

In the same manner, there is a teacher $i_3$ such that $i_3 \in \mu_1(s_2)\setminus \nu(s_2)$. By an analogous argument as before, since $s_2 \in \mu_1(i_3) \setminus \nu(i_3)$, there is a school $s_3 \in \nu(i_3) \setminus \mu_1(i_3)$ such that $\nu(i_3) \succ^B_{i_3} \mu_1(i_3).$ Continuing in this recursive reasoning, there is a teacher $i_k$ such that $i_k \in \mu_1(s_{k-1})\setminus \nu(s_{k-1})$. By a similar argument as before, since $s_{k-1} \in \mu_1(i_k) \setminus \nu(i_k)$, there is a school $s_k \in \nu(i_k) \setminus \mu_1(i_k)$ such that $\nu(i_k) \succ^B_{i_k} \mu_1(i_k).$ Furthermore, following the same reasoning as before, $ |\mu_1(s_k)| = q_{s_k}.$ Note that by the finiteness of the set of schools, we eventually reach school $s_1$. Assume, w.l.o.g.,  $s_k = s_1.$

Thus, forming a cycle of teachers $(i_1, i_2, \dots, i_k)$, $k \geq 2$, and a cycle of schools such that each teacher wants to change her set of schools by incorporating a new school where the next teacher in the cycle is assigned under matching $\mu_1$. Since $ |\mu_1(s_\ell)| = q_{s_\ell}$ for each $\ell=1,\ldots,k$, each time a teacher incorporates a new school, she must let go of another school in her current set.

Considering the $TRDA$ mechanism from the round 0 of $TREADA$ mechanism, let $i_p \in \{i_1, i_2, \dots, i_k\}$ be the teacher in the cycle (or one of them, if there are more than one) who is the last to apply to the school $s_p$ with which she ends up at the end of this round, i.e., $i_p \in \nu(s_p)\setminus \mu_1(s_p)$. Teacher $i_{p-1}$, who wants to incorporate school $s_p$ (and let go of some other school) was already rejected at some previous step, implying that when $i_p$ applies to $s_p$, $|\mu_1(s_p)| = q_{s_p}$. Therefore, some teacher $i'$ is rejected. Consequently, $i'$ is an interrupter and, thus, is rejected at a later step after $i_1$ was rejected by $s_1$ for which $i_1$ is an interrupter. But then, teacher $i_1$ cannot be an interrupter in round 1.

Assume that under matching $\nu$, no interrupter from round $k$, $1 \leq k \leq r-1$, is assigned to a school for which she is an interrupter in round $k$. We want to show that under matching $\nu$, no interrupter from round $r$ is assigned to a school for which she is an interrupter in round $r$. Since $\nu$ Blair-dominates each $\mu_r$, by using the same argument as before, there is a cycle of teachers $(i'_1, \dots, i'_{k'})$ such that each teacher wants to incorporate a school to which the next teacher in the cycle is assigned under matching $\mu_r$ (and let go of some other school). By the induction hypothesis, none of the teachers in this cycle are interrupters for the schools they want to incorporate. Thus, for each teacher in the cycle, there is a step of the $TRDA$ mechanism executed in round $r-1$ where the teacher is rejected by the school she prefers. But then we can apply the same argument we used earlier to conclude that teacher $i'_1$ cannot be an interrupter in round $r$.

At the end of round $R$, there are no remaining interrupters, and we obtain matching $\mu$. Since matching $\nu$ Blair-dominates matching $\mu$, again, there is a cycle of teachers $(i''_1, i''_2, \dots, i''_{k''})$, $k \geq 2$, who want to incorporate a school (and let go of some other school) where the next teacher in the cycle is assigned (for teacher $i''_k$, it is $i''_1$) under matching $\mu$. Note that no teacher in $(i''_1, i''_2, \dots, i''_{k''})$  in any round can be an interrupter for the school to which the next teacher in the cycle is assigned under matching $\mu$. Hence, each teacher in $(i''_1, i''_2, \dots, i''_{k''})$ applies to the school to which the next teacher in the cycle is assigned in some step of the $TRDA$ mechanism in round $R$. Let $i'' \in (i''_1, i''_2, \dots, i''_{k''})$ be the last teacher in the cycle to be rejected by a school to which the next teacher is assigned. When teacher $i''$ applies to the school where she ends up at the end of round $R$,  some teacher $i'''$ was rejected by that school in an earlier step. This occurs because any teacher in $(i''_1, i''_2, \dots, i''_{k''})$, who incorporates this school (and let go of some other school)  was previously rejected by the school at an earlier stage. Therefore, teacher $i'''$ is an interrupter, which contradicts $R$ being the last round. 
 \end{proof}

\noindent \begin{proof}[Proof of Theorem \ref{teorema TREADA eficiente}]
    That the $TREADA$ Blair-dominates the $TRDA$ mechanism is shown in Lemma \ref{lema 2 del eada}. It is clear that if there is no consenting teacher, the mechanism ends in Round $0$, thus the $TREADA$ mechanism and the $TRDA$ mechanism yield the same outcome.  When all teachers consent, the $TREADA$ mechanism's efficiency is proved in Lemma \ref{lema 3 del eada}.
\end{proof}

Now, we present proof that the $TREADA$ and all stable-dominating mechanisms are dynamically not obviously manipulable.\medskip

\noindent\begin{proof}[Proof of Theorem \ref{teorema TREADA es NOM}]
  Let $\mathcal{E}=(\{I^t\}_t, S, \{\succ_{I^t}\}_t, \{>_S^t\}_t,q_S,\mu^0,\varphi)$ be a economy where $\varphi$ is the $TRDA$ mechanism. Denote by $I_{t'}^t$ the set of teachers at period $t$ who joined the economy at period $t'\leq t$. If the priority structure is lexicographic by tenure, then the outcome of the $TRDA$ mechanism is the same outcome of the following procedure:
    
    \textbf{Step $\boldsymbol{1}$}: the teachers in $I_1^t$ propose to her best subset of schools under the teacher-proposing deferred acceptance mechanism. The capacities of the schools are updated by subtracting the positions that are filled.
    
    \textbf{Step $\boldsymbol{t'\leq t}$}: the teachers in $I_{t'}^t$ propose to their best subset of schools (with their unfilled capacities) under the teacher-proposing deferred acceptance mechanism and a matching is produced. The capacities of the schools are updated by subtracting the positions that are filled.

Next, we show that both mechanisms have the same outcome under the assumption that schools' priorities are lexicographic by tenure. 
  Assume that this is not the case. By the tenured nature of our model, the outcome of the $TRDA$ mechanism and the outcome of the previously described mechanisms will be different when there are a period $t$, two teachers $i,j\in I^t $, and a school $s\in S$ such that $i >^t_s j$, $j \gg^t_s i$ and $i$ proposes to $s$.
Thus, there are two cases to consider. Teachers $i,j$ join the economy at the same period or teacher $i$ joins the economy before teacher $j$.\footnote{That \emph{teacher $i$ joins the economy at period $t$} means that there is a school $s$ such that $i\in \mu^t(s)$ and $i\notin I^t_E.$} In both cases, we have that $j\in\mu^{t-1}(s)$ and $i\notin \mu^{t-1}(s).$ This implies that $i,j\in I^t_E.$ 

The fact that $i$ proposes to $s$ implies that $s\notin \mu^{t}(i)$ and $s\in C_i(\mu^{t}(i)\cup \{s\})$.
Since $i\in I^{t}_E$ and $\mu^{t}$ is dynamically rational, $\mu^{t}(i)\succeq_i^B \mu^{t-1}(i).$ Then, 
$s\in C_i(C_i(\mu^{t}(i)\cup \mu^{t-1}(i) ) \cup \{s\})$. By the path-independence property of $C_i$, we have that $s\in C_i(\mu^{t}(i)\cup\mu^{t-1}(i)\cup \{s\}).$ By substitutability, 
\begin{equation}\label{ecu 0 SP}
    s\in C_i(\mu^{t-1}(i)\cup \{s\}).
\end{equation}
Let $k$ be the period when teacher $j$ was assigned to $s$ for the first time and $s\notin \mu^k(i)$. Then,  $j\notin \mu^{k-1}(s).$ Moreover, since $i,j$ join the economy at the same period or teacher $i$ joins the economy before teacher $j$, we have that $i\in I^{k+1}_E.$
Given that $\mu^{\ell}$ is dynamically rational for each $\ell=k+1,\ldots,t$, we have that
\begin{equation}\label{ecu 1 SP}
    \mu^{t}(i)\succeq_i^B \mu^{t-1}(i)\succeq_i^B\ldots \succeq_i^B\mu^{k+1}(i)\succeq_i^B\mu^{k}(i).
\end{equation}
Hence,  by \eqref{ecu 0 SP} and \eqref{ecu 1 SP}, we have $s\in C_i(C_i(\mu^{t-1}(i)\cup\mu^k(i))\cup\{s\} )$. By the path-independence property of $C_i$, we have that $s\in C_i(\mu^{t-1}(i)\cup\mu^k(i)\cup\{s\} )$. Then by substitutability, $s\in C_i(\mu^k(i)\cup\{s\})$. Thus, there is a justified claim of $i$ over $s$ at $\mu^k$, since $s\notin \mu^k(i)$ and $i >^t_s j$ (which is the same that $i >^k_s j$). So, $\mu^k$ is not a dynamically stable matching, which is a contradiction. 
 Therefore, both mechanisms generate the same outcome.

Given that priorities are assumed to be lexicographic by tenure, a teacher in $I^t_{t'}$ only competes with other teachers within the same set. Teachers in this set, under the previous mechanism, cannot alter the unfilled positions of schools after step $t' - 1.$ Thus, in each period, considering the teachers who enter the problem during that period and the remaining unfilled positions of the schools, we can treat this situation as a new static problem. Consequently, by Theorem \ref{teorema 2 arribi} (i), this mechanism does not exhibit obvious manipulations. Furthermore, by Theorem \ref{teorema 2 arribi} (ii), the $TREADA$ mechanism, together with all dynamically stable-dominating mechanisms, are not obviously dynamically manipulable.
\end{proof}


\begin{thebibliography}{40}
\newcommand{\enquote}[1]{``#1''}
\expandafter\ifx\csname natexlab\endcsname\relax\def\natexlab#1{#1}\fi

\bibitem[\protect\citeauthoryear{Abdulkadiro{\u{g}}lu, Che, and
  Yasuda}{Abdulkadiro{\u{g}}lu et~al.}{2011}]{Abdulkadiroglu2011}
\textsc{Abdulkadiro{\u{g}}lu, A., Y.-K. Che, and Y.~Yasuda} (2011):
  \enquote{Resolving conflicting preferences in school choice: The "Boston
  Mechanism" reconsidered,} \emph{American Economic Review}, 101, 399--410.

\bibitem[\protect\citeauthoryear{Abdulkadiro{\u{g}}lu, Pathak, and
  Roth}{Abdulkadiro{\u{g}}lu et~al.}{2004}]{Abdulkadiroglu2004}
\textsc{Abdulkadiro{\u{g}}lu, A., P.~Pathak, and A.~Roth} (2004): \enquote{The
  New York City High School Match,} \emph{American Economic Review}, 95,
  364--367.

\bibitem[\protect\citeauthoryear{Abdulkadiro{\u{g}}lu, Pathak, and
  Roth}{Abdulkadiro{\u{g}}lu et~al.}{2009}]{Abdulkadiroglu2009}
---\hspace{-.1pt}---\hspace{-.1pt}--- (2009): \enquote{Strategy-proofness
  versus Efficiency in Matching with Indifferences: Redesigning the NYC High
  School Match,} \emph{American Economic Review}, 99, 1954--1978.

\bibitem[\protect\citeauthoryear{Abdulkadiro{\u{g}}lu, Pathak, Roth, and
  S{\"{o}}nmez}{Abdulkadiro{\u{g}}lu et~al.}{2005}]{Abdulkadiroglu2005}
\textsc{Abdulkadiro{\u{g}}lu, A., P.~Pathak, A.~Roth, and T.~S{\"{o}}nmez}
  (2005): \enquote{The Boston public school match,} \emph{American Economic
  Review}, 95, 368--371.

\bibitem[\protect\citeauthoryear{Abdulkadiro{\u{g}}lu and
  S{\"o}nmez}{Abdulkadiro{\u{g}}lu and
  S{\"o}nmez}{2003}]{abdulkadirouglu2003school}
\textsc{Abdulkadiro{\u{g}}lu, A. and T.~S{\"o}nmez} (2003): \enquote{School
  choice: A mechanism design approach,} \emph{American economic review}, 93,
  729--747.

\bibitem[\protect\citeauthoryear{Aizerman and Malishevski}{Aizerman and
  Malishevski}{1981}]{aizerman1981general}
\textsc{Aizerman, M. and A.~Malishevski} (1981): \enquote{General theory of
  best variants choice: Some aspects,} \emph{IEEE Transactions on Automatic
  Control}, 26, 1030--1040.

\bibitem[\protect\citeauthoryear{Alva and Manjunath}{Alva and
  Manjunath}{2019}]{alva2019strategy}
\textsc{Alva, S. and V.~Manjunath} (2019): \enquote{Strategy-proof
  Pareto-improvement,} \emph{Journal of Economic Theory}, 181, 121--142.

\bibitem[\protect\citeauthoryear{Arribillaga and Risma}{Arribillaga and
  Risma}{2025{\natexlab{a}}}]{arribillaga2024many-yo-many-obvious}
\textsc{Arribillaga, R.~P. and E.~P. Risma} (2025{\natexlab{a}}):
  \enquote{Obvious manipulations in many-to-many matching with and without
  contracts,} \emph{Working paper}.

\bibitem[\protect\citeauthoryear{Arribillaga and Risma}{Arribillaga and
  Risma}{2025{\natexlab{b}}}]{arribillaga2023obvious}
---\hspace{-.1pt}---\hspace{-.1pt}--- (2025{\natexlab{b}}): \enquote{Obvious
  manipulations in matching with and without contracts,} \emph{Games and
  Economic Behavior}, 151, 70--81.

\bibitem[\protect\citeauthoryear{Blair}{Blair}{1988}]{Blair1988}
\textsc{Blair, C.} (1988): \enquote{The Lattice Structure of the Set of Stable
  Matchings with Multiple Partners,} \emph{Mathematics of Operations Research},
  13, 619--628.

\bibitem[\protect\citeauthoryear{Bonifacio, Juarez, Neme, and Oviedo}{Bonifacio
  et~al.}{2022}]{bonifacio2022cycles}
\textsc{Bonifacio, A.~G., N.~Juarez, P.~Neme, and J.~Oviedo} (2022):
  \enquote{Cycles to compute the full set of many-to-many stable matchings,}
  \emph{Mathematical Social Sciences}, 117, 20--29.

\bibitem[\protect\citeauthoryear{Chambers and Yenmez}{Chambers and
  Yenmez}{2017}]{chamb2017choice}
\textsc{Chambers, C.~P. and M.~B. Yenmez} (2017): \enquote{Choice and
  matching,} \emph{American Economic Journal: Microeconomics}, 9, 126--147.

\bibitem[\protect\citeauthoryear{Chen and Pereyra}{Chen and
  Pereyra}{2019}]{chen2019self}
\textsc{Chen, L. and J.~S. Pereyra} (2019): \enquote{Self-selection in school
  choice,} \emph{Games and Economic Behavior}, 117, 59--81.

\bibitem[\protect\citeauthoryear{Chen and M{\"o}ller}{Chen and
  M{\"o}ller}{2024}]{chen2024regret}
\textsc{Chen, Y. and M.~M{\"o}ller} (2024): \enquote{Regret-free truth-telling
  in school choice with consent,} \emph{Theoretical Economics}, 19, 635--666.

\bibitem[\protect\citeauthoryear{Combe, Tercieux, and Terrier}{Combe
  et~al.}{2022}]{combe2022design}
\textsc{Combe, J., O.~Tercieux, and C.~Terrier} (2022): \enquote{The design of
  teacher assignment: Theory and evidence,} \emph{The Review of Economic
  Studies}, 89, 3154--3222.

\bibitem[\protect\citeauthoryear{Doval}{Doval}{2022}]{doval2022dynamicallstable}
\textsc{Doval, L.} (2022): \enquote{Dynamically stable matching,}
  \emph{Theoretical Economics}, 17, 687--724.

\bibitem[\protect\citeauthoryear{Echenique and Oviedo}{Echenique and
  Oviedo}{2004}]{Echenique2004}
\textsc{Echenique, F. and J.~Oviedo} (2004): \enquote{Core many-to-one
  matchings by fixed-point methods,} \emph{Journal of Economic Theory}, 115,
  358--376.

\bibitem[\protect\citeauthoryear{Echenique and Oviedo}{Echenique and
  Oviedo}{2006}]{Echenique2006}
---\hspace{-.1pt}---\hspace{-.1pt}--- (2006): \enquote{A theory of stability in
  many-to-many matching markets,} \emph{Theoretical Economics}, 1, 233--273.

\bibitem[\protect\citeauthoryear{Fleiner}{Fleiner}{2003}]{Fleiner2003}
\textsc{Fleiner, T.} (2003): \enquote{A Fixed-Point Approach to Stable
  Matchings and Some Applications,} \emph{Mathematics of Operations Research},
  28, 103--126.

\bibitem[\protect\citeauthoryear{Gale and Shapley}{Gale and
  Shapley}{1962}]{GaleShapley1962}
\textsc{Gale, D. and L.~Shapley} (1962): \enquote{College admissions and the
  stability of marriage,} \emph{American Mathematical Monthly}, 69, 9--15.

\bibitem[\protect\citeauthoryear{Hatfield and Milgrom}{Hatfield and
  Milgrom}{2005}]{hatfield2005matching}
\textsc{Hatfield, J. and P.~Milgrom} (2005): \enquote{Matching with contracts,}
  \emph{American Economic Review}, 95, 913--935.

\bibitem[\protect\citeauthoryear{Hirata and Kasuya}{Hirata and
  Kasuya}{2017}]{hirata2017stable}
\textsc{Hirata, D. and Y.~Kasuya} (2017): \enquote{On stable and strategy-proof
  rules in matching markets with contracts,} \emph{Journal of Economic Theory},
  168, 27--43.

\bibitem[\protect\citeauthoryear{Iwase}{Iwase}{2022}]{iwase2022equivalence}
\textsc{Iwase, Y.} (2022): \enquote{Equivalence theorem in matching with
  contracts,} \emph{Review of Economic Design}, 1--9.

\bibitem[\protect\citeauthoryear{Kesten}{Kesten}{2010}]{kesten2010school}
\textsc{Kesten, O.} (2010): \enquote{School choice with consent,} \emph{The
  Quarterly Journal of Economics}, 125, 1297--1348.

\bibitem[\protect\citeauthoryear{Klijn, Pais, and Vorsatz}{Klijn
  et~al.}{2013}]{Klijn2013}
\textsc{Klijn, F., J.~Pais, and M.~Vorsatz} (2013): \enquote{Preference
  intensities and risk aversion in school choice: A laboratory experiment,}
  \emph{Experimental Economics}, 16, 1--22.

\bibitem[\protect\citeauthoryear{Kojima}{Kojima}{2011}]{Kojima2011}
\textsc{Kojima, F.} (2011): \enquote{Robust stability in matching markets,}
  \emph{Theoretical Economics}, 6, 257--267.

\bibitem[\protect\citeauthoryear{Mart{\'\i}nez, Mass{\'o}, Neme, and
  Oviedo}{Mart{\'\i}nez et~al.}{2004{\natexlab{a}}}]{Martinez2004}
\textsc{Mart{\'\i}nez, R., J.~Mass{\'o}, A.~Neme, and J.~Oviedo}
  (2004{\natexlab{a}}): \enquote{An algorithm to compute the set of
  many-to-many stable matchings,} \emph{Mathematical Social Sciences}, 47,
  187--210.

\bibitem[\protect\citeauthoryear{Mart{\'\i}nez, Mass{\'o}, Neme, and
  Oviedo}{Mart{\'\i}nez et~al.}{2004{\natexlab{b}}}]{martinez2004group}
---\hspace{-.1pt}---\hspace{-.1pt}--- (2004{\natexlab{b}}): \enquote{On group
  strategy-proof mechanisms for a many-to-one matching model,}
  \emph{International Journal of Game Theory}, 33, 115--128.

\bibitem[\protect\citeauthoryear{Mart{\'\i}nez, Mass{\'o}, Neme, and
  Oviedo}{Mart{\'\i}nez et~al.}{2008}]{martinez2008invariance}
---\hspace{-.1pt}---\hspace{-.1pt}--- (2008): \enquote{On the invariance of the
  set of stable matchings with respect to substitutable preference profiles,}
  \emph{International Journal of Game Theory}, 36, 497--518.

\bibitem[\protect\citeauthoryear{Nicol{\`o}, Salmaso, and Saulle}{Nicol{\`o}
  et~al.}{2023}]{nicolo2023dynamic}
\textsc{Nicol{\`o}, A., P.~Salmaso, and R.~Saulle} (2023): \enquote{Dynamic
  one-sided matching,} \emph{Available at SSRN 4352130}.

\bibitem[\protect\citeauthoryear{Pereyra}{Pereyra}{2013}]{pereyra2013dynamic}
\textsc{Pereyra, J.~S.} (2013): \enquote{A dynamic school choice model,}
  \emph{Games and Economic Behavior}, 80, 100--114.

\bibitem[\protect\citeauthoryear{Romero-Medina and Triossi}{Romero-Medina and
  Triossi}{2021}]{romero2021two}
\textsc{Romero-Medina, A. and M.~Triossi} (2021): \enquote{Two-sided
  strategy-proofness in many-to-many matching markets,} \emph{International
  Journal of Game Theory}, 50, 105--118.

\bibitem[\protect\citeauthoryear{Roth}{Roth}{1984}]{Roth1984b}
\textsc{Roth, A.} (1984): \enquote{Stability and Polarization of Interests in
  Job Matching,} \emph{Econometrica}, 52, 47 -- 57.

\bibitem[\protect\citeauthoryear{Roth}{Roth}{1985{\natexlab{a}}}]{Roth1985a}
---\hspace{-.1pt}---\hspace{-.1pt}--- (1985{\natexlab{a}}): \enquote{The
  college admissions problem is not equivalent to the marriage problem,}
  \emph{Journal of Economic Theory}, 36, 277--288.

\bibitem[\protect\citeauthoryear{Roth}{Roth}{1985{\natexlab{b}}}]{Roth1985}
---\hspace{-.1pt}---\hspace{-.1pt}--- (1985{\natexlab{b}}): \enquote{Conflict
  and Coincidence of Interest in Job Matching: Some New Results and Open
  Questions,} \emph{Mathematics of Operations Research}, 10, 379--389.

\bibitem[\protect\citeauthoryear{Roth and Sotomayor}{Roth and
  Sotomayor}{1990}]{RothSotomayor90}
\textsc{Roth, A. and M.~Sotomayor} (1990): \emph{Two-sided matching}, Cambridge
  University Press.

\bibitem[\protect\citeauthoryear{Sakai}{Sakai}{2011}]{sakai2011note}
\textsc{Sakai, T.} (2011): \enquote{A note on strategy-proofness from the
  doctor side in matching with contracts,} \emph{Review of Economic Design},
  15, 337--342.

\bibitem[\protect\citeauthoryear{Saulle, Nicol{\`o}, and Salmaso}{Saulle
  et~al.}{2024}]{saulle2024rationalizable}
\textsc{Saulle, R., A.~Nicol{\`o}, and P.~Salmaso} (2024):
  \enquote{Rationalizable Conjectures in Dynamic Matching,} .

\bibitem[\protect\citeauthoryear{Tang and Zhang}{Tang and
  Zhang}{2021}]{tang2021weak}
\textsc{Tang, Q. and Y.~Zhang} (2021): \enquote{Weak stability and Pareto
  efficiency in school choice,} \emph{Economic Theory}, 71, 533--552.

\bibitem[\protect\citeauthoryear{Troyan and Morrill}{Troyan and
  Morrill}{2020}]{troyan2020obvious}
\textsc{Troyan, P. and T.~Morrill} (2020): \enquote{Obvious manipulations,}
  \emph{Journal of Economic Theory}, 185, 104970.

\end{thebibliography}
\end{document}